\documentclass[
reprint, 
superscriptaddress, 
aps, 
prl, 
]{revtex4-2}

\pdfoutput=1

\usepackage{graphicx} 
\usepackage{dcolumn} 
\usepackage{bm} 

\usepackage{amsthm,amssymb, physics}
\usepackage{xcolor}
\usepackage{soul}
\usepackage{url}
\usepackage{enumitem}
\usepackage{appendix}
\usepackage{mathtools}
\usepackage{quantikz}
\usepackage{tikz}
\usepackage{makecell}
\usepackage[normalem]{ulem}
\usepackage{hyperref}
\usepackage[hyphenbreaks]{breakurl}

\setcounter{secnumdepth}{2}

\bibliographystyle{apsrev4-2}

\newcommand{\mcl}[1]{\ensuremath{\mathcal{#1}}}

\newcommand{\bP}{\mathbb{P}}

\newcommand{\Ilamb}[1][\lambda]{\ensuremath{I_{#1}}}
\newcommand{\oprep}[1][\rho]{\ensuremath{\mu_{#1}}}

\newcommand{\omeas}[1]{\ensuremath{\xi_{#1}}}

\newcommand{\DChannel}[1]{\ensuremath{\mathcal{M}^{D_#1}}}

\newcommand{\OnticMeasureBProb}[1]{\omeas{\projB{#1}}}
\newcommand{\OnticMeasureAProb}[1]{\omeas{\projA{#1}}}
\newcommand{\OnticXWeakMeasureAProb}[1]{\omeas{\projA{#1}}^{\mathrm{WX}}}
\newcommand{\OnticYWeakMeasureAProb}[1]{\omeas{\projA{#1}}^{\mathrm{WY}}}
\newcommand{\OnticDTransformation}[1]{\ensuremath{\Gamma_{D_{#1}}}}

\newcommand{\wvrkraus}[1]{\ensuremath{N_{#1}}}
\newcommand{\wvikraus}[1]{\ensuremath{M_{#1}}}
\newcommand{\Doperator}[1]{\ensuremath{D}_{#1}}
\newcommand{\ProbOfBOnExot}[2]{\ensuremath{p^{#1}_{#2}}}
\newcommand{\ProbOfBOnDExotD}[2]{\ensuremath{p^{D_{#1}}_{#2}}}

\newcommand{\UndetermindedStochasticMatrix}[1]{\phi_{#1}}

\newcommand{\projA}[1]{\ensuremath{P}_{#1}}
\newcommand{\projB}[1]{\ensuremath{\Pi}_{#1}}
\newcommand{\KD}[1]{\ensuremath{Q}_{#1}}
\newcommand{\ArbHermOperator}{\ensuremath{H}}
\newcommand{\exot}{\ensuremath{\rho_\star}}
\newcommand{\KDexot}{\ensuremath{\mcl{E}^{\mathrm{exot}}_{\mathrm{KD+}}}}
\newcommand{\conv}[1]{\mathrm{conv}\left(#1\right)}

\newcommand{\EcalKDCpu}{{\mcl E}_{\mathrm{KD+}}^{\mathrm{pure}}}

\newcommand{\OutcomePMF}[2]{\ensuremath{f^{(#1)}_{#2}}}

\newcommand{\NPr}{{\mcl N}{({\rho})}}

\newcommand{\ProbabilityOfMeasurement}{\ensuremath{p_\mathrm{m}}}
\newcommand{\ProbabilityOfDisturbance}{\ensuremath{p_\mathrm{d}}}

\newtheorem{Theorem}{Theorem}
\newtheorem{Lemma}[Theorem]{Lemma}

\begin{document}

\title{Contextuality Can be Verified with Noncontextual Experiments}

\author{Jonathan J. Thio}
\affiliation{Cavendish Lab., Department of Physics, Univ. of Cambridge, Cambridge CB3 0HE, United Kingdom}

\author{Wilfred Salmon}
\affiliation{DAMTP, Centre for Mathematical Sciences, Univ. of Cambridge, CB30WA, United Kingdom}
\affiliation{Hitachi Cambridge Laboratory, J. J. Thomson Avenue, Cambridge CB3 0HE, United Kingdom}

\author{Crispin H. W. Barnes}
\affiliation{Cavendish Lab., Department of Physics, Univ. of Cambridge, Cambridge CB3 0HE, United Kingdom}

\author{Stephan De Bièvre}
\affiliation{Univ. Lille, CNRS, Inria, UMR 8524, Laboratoire Paul Painlevé, F-59000 Lille, France}

\author{David R. M. Arvidsson-Shukur}
\affiliation{Hitachi Cambridge Laboratory, J. J. Thomson Avenue, Cambridge CB3 0HE, United Kingdom}
\date{\today} 

\begin{abstract}
We uncover new features of generalized contextuality by connecting it to the Kirkwood-Dirac (KD) quasiprobability distribution. Quantum states can be represented by KD distributions, which take values in the complex unit disc. Only for ``KD-positive'' states are the KD distributions joint probability distributions.  A KD distribution can be measured by a series of weak and projective measurements. We design such an experiment and show that it is contextual iff the underlying state is not KD-positive.  We analyze this connection with respect to mixed KD-positive states that cannot be decomposed as convex combinations of pure KD-positive states. Our result is the construction of a noncontextual experiment that enables an experimenter to verify contextuality.
\end{abstract}

\maketitle

\textit{Introduction:---}
Quantum physics is fundamentally different from classical physics. However, pinpointing exactly what is nonclassical is famously difficult. While certain experiments necessitate quantum theory for a full description, most can be described with classical models. This begs the question: Where does the boundary between classical and nonclassical experiments lie? Investigating this quantum--classical boundary has proven fruitful for computational \cite{Gottesman2004, Weedbrook2012, Veitch_2012, Veitch_2014, Raussendorf20} and metrological \cite{Giovannetti2006, Giovanetti11, Maccone2013, Arvidsson-Shukur2020, lupu2022negative} applications. Here, we study the boundary from a more foundational perspective.

A trending and rigorous notion of nonclassicality is generalized contextuality \cite{spekkens2005contextuality, spekkens2008negativity, Quanta22, Mazurek2016, Schmid2018, Lostaglio2020}. To determine, within this notion, if an experiment is classical, one describes it with a hidden-variable model. In such a model, one considers a set $\Lambda$ of hidden variables. A preparation $P$ is described by a probability distribution $\mu_P(\lambda)$, where $\lambda \in \Lambda$. A transformation $T$ is described by a transition matrix between hidden variables, $\Gamma_{T}(\lambda'|\lambda)$. A measurement $M$ is described by a probability distribution on the outcome set conditioned on the hidden variable, $\xi_{M}(k|\lambda)$, where $k$ is a specific outcome.  The probability of obtaining outcome $k$ given the $3$-tuple ($P$, $T$, $M$), is  
\begin{equation}\label{eq:HowHVMMAkePredictions}
    \mathbb{P}(k|P,T,M) = \int_{\Lambda \times \Lambda} d\lambda d\lambda' \mu_{P}(\lambda) \Gamma_{T}(\lambda'|\lambda) \xi_M (k|\lambda').
\end{equation}
An experiment is composed of a set of such $3$-tuples. The hidden-variable model can then be tuned to match predictions of quantum theory or experimental data. This can always be done \cite{spekkens2005contextuality}. However, the hidden-variable model may have nonclassical features, for example, nonlocality \cite{Bell1964}. We capture nonclassicality using the notion of generalized contextuality \cite{spekkens2005contextuality}. We say that a hidden-variable model is noncontextual iff it assigns the same probability distribution to experimentally indistinguishable procedures. We say that an experiment is contextual iff there does not exist a noncontextual hidden-variable model describing it. Nevertheless, in this article, we construct an experiment [Fig. \ref{fig:AliceAndBobExperiment}], the contextuality of which can be verified with noncontextual procedures. This experiment relies on measurements of a Kirkwood-Dirac (KD) distribution.

\begin{figure}
    \centering
    \begin{tikzpicture}[]

    \draw[fill=blue!20, thick] (-2,0.2) rectangle (1.5,1.7);
    \node at (-1.3,0.7) {$(\psi_j)_{j=1}^N$};
    \node[draw, fill=white, text width=1.3cm, align=center] at (0.5,0.7) {Secret order};
    \node at (-0.25,1.4) {Alice};
    \draw[very thick, ->] (-0.75,0.7) -- (-0.25,0.7);
    
    \draw[fill=yellow!20, thick] (1.7,0.2) rectangle (6.0,1.7);
    \draw[very thick, ->] (1.25,0.7) -- (2.2,0.7);
    \node at (3.65,1.4) {Bob};
    \node at (2.4,0.7){$\exot$};
    \draw[very thick, ->] (2.6,0.7) -- (3.1,0.7);
    \node[draw, fill=white, text width=2.6cm, align=center] at (4.5,0.7) {Protocols 1 - 6};
    \draw[thick, ->] (4.5,0.35) -- (4.5,-0.2);

    \draw[rounded corners=10pt, dashed] (-2,-0.6) rectangle (6,-0);
    \node at (4.5, -0.3) {Data};
    \node at (2,-0.3) {Public Channel};
    \draw[thick, ->] (-0.5,-0.2) -- (-0.5,0.35);
    
    \end{tikzpicture}
    \caption{\textbf{Noncontextual Experiment that Signals Contextuality.} Alice sends the sequence $(\psi_j)_{j=1}^N$ of pure states  to Bob without disclosing the order. Bob effectively receives a mixed state $\exot = \frac{1}{N} \sum_{j=1}^N \psi_j$. Bob performs  Procedures 1 to 6 (outlined in the main text). His experiment is noncontextual. Nonetheless, he can verify that Alice's experiment is contextual.}
    \label{fig:AliceAndBobExperiment}
\end{figure}
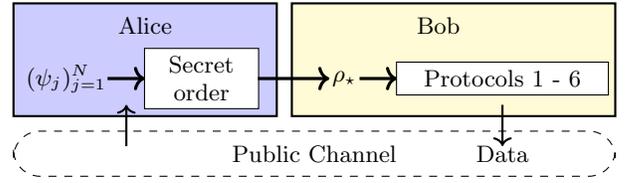
\textit{KD Distributions:---} The KD distributions \cite{Kirkwood33,Dirac45,arvidssonshukur2024properties,lostaglio2023kirkwood} are a family of quasiprobability distributions that have recently found numerous applications in the field of quantum information processing. We consider a Hilbert space of dimension $d < \infty$. The KD distribution $Q$ maps a quantum state $\rho$ to a corresponding quasiprobability distribution based on two nondegenerate observables $A = \sum_j a_j \projA{j}$ and $B = \sum_k b_k \projB{k}$. Here, $\{ a_j \}$ and $\{ b_k \}$ are the eigenvalues of $A$ and $B$, respectively. $\projA{j}$ and $\projB{k}$ are the rank-$1$ projectors onto the corresponding eigenspaces. The KD distribution of a state $\rho$ is 
\begin{equation}
    \KD{j,k}(\rho) \coloneqq \Tr (\projB{k}\projA{j}\rho).
\end{equation}
Throughout this work, we assume that $\projA{j} \neq \projB{k}$ for all $j$ and $k$. Then, $Q$ is invertible: Knowledge of $Q(\rho)$ enables an informationally complete reconstruction of $\rho$ \cite{arvidssonshukur2024properties}. 

The KD distribution is not a proper probability distribution as its entries may lie in the complex unit disc. However, for certain quantum states, all the entries of the KD distribution lie in the interval $[0,1]$, in which case the KD distribution is a probability distribution. We say that a quantum state is KD-positive iff this is the case. Otherwise, we call it KD-nonpositive. One can quantify how nonpositive a KD distribution is by its nonpositivity:
\begin{equation}
\NPr = -1 + \sum_{j,k}  \left| \KD{j,k}(\rho) \right| \geq 0,
\end{equation}
\noindent where $\rho$ is KD-positive iff $\NPr=0$ \cite{YungHalp19, arvidssonshukur2024properties}. For example, if $\rho$ equals an eigenstate of $A$ or $B$, then $\NPr=0$. However, for pure states, if $\rho$ has nonzero overlap with sufficiently many of the eigenstates of $A$ and $B$, then $\rho$ is KD-nonpositive: $\NPr > 0$ \cite{ArvidssonShukur2021, debievre2021}. It was recently shown, that for almost all choices of $A$ and $B$, all KD-positive states are convex mixtures of $A$ and $B$'s basis states \cite{ADBLSLT24}. However, for certain $A$ and $B$, there exist mixed states that are KD-positive, but that cannot be written as convex combinations of pure KD-positive states \cite{langrenez2023characterizing}. We call such states `exotic', and we denote the set containing them by $\KDexot$. Exotic states will play the central role in what follows. 

\textit{Measurements of $\KD{}(\rho)$:---}To measure $\KD{} (\rho)$, one can perform a series of protocols on $\rho$, involving the projective measurement of $\projA{j}$ and $\projB{k}$, and the so-called `weak measurement' \cite{Aharanov88,Duck89,Jozsa2007,Dressel14} of $\projA{j}$. The weak measurement involves coupling weakly the $A$ observable of $\rho$ to a qubit ancilla, which later is measured in the $X$ or $Y$ eigenbasis. The strength of the coupling is given by the parameter $\epsilon \ll 1$. The exact implementation of the weak measurement is given in Note I of the Supplementary Material. For our purposes, it suffices to note that the $X$- and $Y$-type weak measurements of $\projA{j}$ are represented by the Kraus operators 
\begin{align}\label{eq:KrausOperatorsXPointer}
    \wvrkraus{x,j} & = \frac{1}{\sqrt{2}}(\cos \epsilon I + x \sin \epsilon \Doperator{j}), \\
\label{eq:KrausOperatorYPointer}
        \wvikraus{y,j}  & = \frac{1}{\sqrt{2}}(\cos \epsilon I -i y \sin \epsilon \Doperator{j}), 
\end{align}
respectively. Here, $\Doperator{j} \coloneqq 2\projA{j} - I$ and $x,y \in \{\pm 1 \}$ are the outcomes of the two weak measurements. When $\epsilon = \pi/2$, then $\wvrkraus{x,j}$ and $\wvikraus{x,j}$ are projective. When $\epsilon \ll 1$, the weak measurements convey little information about the observable.

To measure the KD distribution, we consider the following six protocols [Fig. \ref{fig:protocols}] on a system in the initial state $\rho$. Protocols $1$ to $3$ measure the `weak values' \cite{Dressel14,Aharanov88,Duck89,Jozsa2007} of $\projA{j}$. Protocols $4$ through $6$ are designed to establish the connection to contextuality, outlined below.

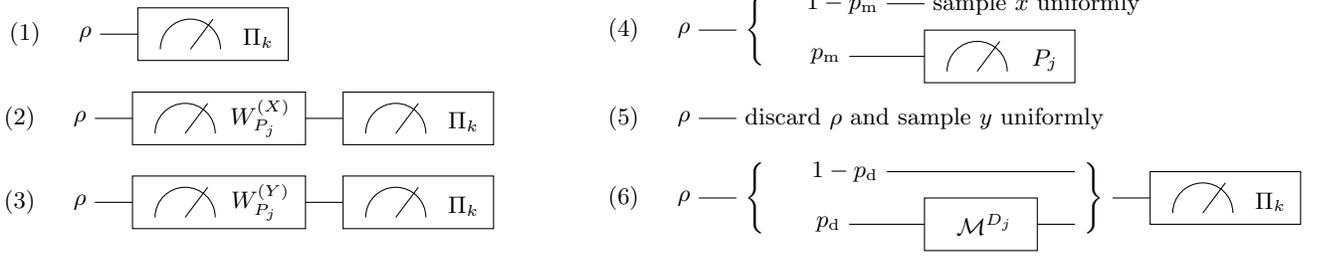
\begin{figure*}\label{fig:procedures}
    \centering
    \hspace{-40pt}
    \begin{tabular}{@{}l@{} @{}l@{}}
        \begin{tabular}{c}
            \begin{tikzpicture}
                \draw (-1.5,0.5) node {$(1)$};
                \draw (-0.5,0.5) node[left] {$\rho$} -- (0,0.5);
                \draw (0,0.15) rectangle (2,0.85);
                \draw (1.1,0.3) arc[start angle=0, end angle=180, radius=0.4];
                \draw (0.7,0.3) -- (1.0,0.7);
                \node at (1.6,0.45) {$\projB{k}$};
            \end{tikzpicture}  \\
        \end{tabular}
        &
        \begin{tabular}{@{}l@{}}
            \begin{tikzpicture}\hspace{40pt}
                \draw (-1.5,0.5) node {$(4)$};
                \draw (-0.5,0.5) node[left] {$\rho$} -- (0,0.5);
                \node[right] at (0,0.5) {$\Biggl\{$};
                \draw (2,0.85) node[left] {$1-\ProbabilityOfMeasurement$} -- (2.5,0.85) node[right]{sample $x$ uniformly};
                \draw (1.5,0.15) node[left] {$\ProbabilityOfMeasurement$} -- (2.5,0.15);
                \draw (2.5,-0.20) rectangle (4.5,0.5);
                \draw (3.6,-0.05) arc[start angle=0, end angle=180, radius=0.4];
                \draw (3.2,-0.05) -- (3.5,0.35);
                \node at (4.1,0.1) {$\projA{j}$};
            \end{tikzpicture}\\ 
        \end{tabular}\\
        \begin{tabular}{@{}l@{}}
            \begin{tikzpicture}
                \draw (-1.5,0.5) node {$(2)$};
                \draw (-0.5,0.5) node[left] {$\rho$} -- (0,0.5);
                \draw (0,0.15) rectangle (2.3,0.85);
                \draw (1.1,0.3) arc[start angle=0, end angle=180, radius=0.4];
                \draw (0.7,0.3) -- (1.0,0.7);
                \node at (1.7,0.5) {$W^{(X)}_{\projA{j}}$};
                \draw (2.3,0.5) -- (2.8,0.5);
                \draw (2.8,0.15) rectangle (4.8,0.85);
                \draw (3.9,0.3) arc[start angle=0, end angle=180, radius=0.4];
                \draw (3.5,0.3) -- (3.8,0.7);
                \node at (4.4,0.45) {$\projB{k}$};
            \end{tikzpicture} \\ 
        \end{tabular}
        &
        \begin{tabular}{@{}l@{}}
            \begin{tikzpicture}\hspace{40pt}
                \draw (-1.5,0.5) node {$(5)$};
                \draw (-0.5,0.5) node[left] {$\rho$} -- (0,0.5) node[right]{discard $\rho$ and sample $y$ uniformly};
            \end{tikzpicture} \\ 
        \end{tabular}
        \\
        \begin{tabular}{@{}l@{}}
            \begin{tikzpicture}
                \draw (-1.5,0.5) node {$(3)$};
                \draw (-0.5,0.5) node[left] {$\rho$} -- (0,0.5);
                \draw (0,0.15) rectangle (2.3,0.85);
                \draw (1.1,0.3) arc[start angle=0, end angle=180, radius=0.4];
                \draw (0.7,0.3) -- (1.0,0.7);
                \node at (1.7,0.5) {$W^{(Y)}_{\projA{j}}$};
                \draw (2.3,0.5) -- (2.8,0.5);
                \draw (2.8,0.15) rectangle (4.8,0.85);
                \draw (3.9,0.3) arc[start angle=0, end angle=180, radius=0.4];
                \draw (3.5,0.3) -- (3.8,0.7);
                \node at (4.4,0.45) {$\projB{k}$};
            \end{tikzpicture} \\ 
        \end{tabular}
        &
        \begin{tabular}{@{}l@{}}
            \begin{tikzpicture}\hspace{40pt}
                \draw (-1.5,0.5) node {$(6)$};
                \draw (-0.5,0.5) node[left] {$\rho$} -- (0,0.5);
                \node[right] at (0,0.5) {$\Biggl\{$};
                \draw (2,0.85) node[left] {$1-\ProbabilityOfDisturbance$} -- (4.5,0.85);
                \draw (1.5,0.15) node[left] {$\ProbabilityOfDisturbance$} -- (2.5,0.15);
                \draw (2.5,-0.2) rectangle (4,0.5);
                \node at (3.3,0.15) {$\DChannel{j}$};
                \draw (4,0.15) -- (4.5,0.15);
                \node[right] at (4.5,0.5) {$\Biggl\}$};
                \draw (5,0.5) -- (5.5,0.5);
                \draw (5.5,0.15) rectangle (7.5,0.85);
                \draw (6.6,0.3) arc[start angle=0, end angle=180, radius=0.4];
                \draw (6.2,0.3) -- (6.5,0.7);
                \node at (7.1,0.45) {$\projB{k}$};
            \end{tikzpicture} \\ 
        \end{tabular}
    \end{tabular}
    \caption{\textbf{Experimental Protocols.} Upon receiving a state $\rho$ from Alice, Bob randomly implements one of six protocols [see Fig. \ref{fig:AliceAndBobExperiment}]. The boxes with $\projA{j}$ and $\projB{k}$ represent projective measurements. The boxes labeled by $W^{(X)}_{\projA{j}}$ and  $W^{(Y)}_{\projA{j}}$ represent $X$-type and $Y$-type weak measurements of $\projA{j}$, respectively. The quantum theoretical predictions for these protocols and their contextual implications are given in the main body of the article.}
    \label{fig:protocols}
\end{figure*}

\begin{enumerate}
    \item Measure $\projB{k}$ on the system (returning outcome $z$).
    \item First, perform an $X$-type weak measurement of $\projA{j}$ (returning outcome $x$) and then measure $\projB{k}$ (returning outcome $z$).
    \item First perform an $Y$-type weak measurement of $\projA{j}$ (returning outcome $y$) and then measure $\projB{k}$ (returning outcome $z$).
    \item Either, with probability $\ProbabilityOfMeasurement \coloneqq \sin 2\epsilon$ , measure $\projA{j}$ (returning outcome $x$) or, with probability $1-\ProbabilityOfMeasurement$, return  $x\in\{\pm 1\}$ uniformly at random.
    \item Discard the system and  return as output $y\in\{\pm 1\}$ uniformly at random.
    \item Either apply the quantum channel $\DChannel{j}(\rho) \coloneqq \Doperator{j}\rho\Doperator{j}^{\dagger}$ with probability $\ProbabilityOfDisturbance \coloneqq \sin^2 \epsilon$ or apply the identity channel with probability $1-\ProbabilityOfDisturbance$, and finally measure $\projB{k}$ (returning outcome $z$). 
\end{enumerate}
These six protocols include repetition over all $j$ and $k$.
We denote the outcome-probability distributions for these protocols by $\OutcomePMF{1}{k}(z)$, $\OutcomePMF{1}{j,k}(x,z)$, $\OutcomePMF{3}{j,k}(y,z)$, $\OutcomePMF{4}{j}(x)$, $\OutcomePMF{5}{}(y)$ and $\OutcomePMF{6}{j,k}(z)$. Here $x,y,z \in \{\pm 1\}$.  The quantum-theoretical predictions for the outcome-probability distributions can be calculated using Eqs. \eqref{eq:KrausOperatorsXPointer} and \eqref{eq:KrausOperatorYPointer}. We introduce the shorthand $\projB{k}^z = \projB{k}$ if $z = +1$ and $\projB{k}^z = I - \projB{k}$ if $z = -1$. We find that
\begin{align}
    \label{eq:f1}
    \OutcomePMF{1}{k}(z)  =&  \Tr (\projB{k}^z \rho), \\
    \label{eq:f2}
\OutcomePMF{2}{j,k}(x,z) = &  \left(\frac{1}{2} - x\epsilon\right)\Tr (\projB{k}^z \rho) + x\epsilon(1-z)\Tr(\projA{j}\rho) \nonumber \\
& + 2xz\epsilon\Re(\KD{j,k}(\rho)) + O(\epsilon^2), \\
\label{eq:f3}
    \OutcomePMF{3}{j,k}(y,z) =& \frac{1}{2}\Tr(\projB{j}^z\rho) + 2yz\epsilon \Im \KD{j,k}(\rho) + O(\epsilon^2).
\end{align}
Quantum theory allows us to express the outcome-probability distributions of Protocols \eqref{eq:f4} - \eqref{eq:f6} in terms of the outcome-probability distributions of protocols \eqref{eq:f1} - \eqref{eq:f3}:
\begin{align}
\label{eq:f4}
\OutcomePMF{4}{j}(x) =& \sum_z \OutcomePMF{2}{j,k}(x,z)\\
\label{eq:f5}
\OutcomePMF{5}{}(y) =& \sum_z \OutcomePMF{3}{j,k}(y,z)\\
\label{eq:f6}
\OutcomePMF{6}{j,k}(z) =& \sum_x \OutcomePMF{2}{j,k}(x,z) = \sum_y \OutcomePMF{3}{j,k}(y,z)
\end{align}
(Detailed calculations are given in Note I of the Supplementary Material.)
When $z = +1$, then $x\epsilon(1-z)\Tr(\projA{j}\rho) = 0$, and the real and imaginary parts of the KD distribution can be deduced from Eqs. \eqref{eq:f1}, \eqref{eq:f2} and \eqref{eq:f3}.

\textit{KD-nonpositivity as a faithful witness of contextuality:---}Equations \eqref{eq:f4} - \eqref{eq:f6} can be used to establish noncontextuality constraints. The probability distribution for the fifth protocol can be obtained by summing over the $z$ variable in the probability distribution for the third protocol [Eq. \eqref{eq:f5}]. This implies that, within quantum theory, one cannot distinguish: (i) a $Y$-type weak measurement of $\projA{j}$ followed by a discarding of the system, and (ii) entirely ignoring the system and picking an outcome uniformly at random. Thus, in any noncontextual hidden-variable model, these two procedures are represented by the same probability distribution:
\begin{equation}
    \OnticYWeakMeasureAProb{j}(y|\lambda) = \frac{1}{2}. \label{eq:NonContextualityConstraint1}
\end{equation}
\noindent Here, $\OnticYWeakMeasureAProb{j}(y|\lambda)$ is the probability distribution representing a $Y$-type weak measurement of $\projA{j}$ followed by a discarding the system. Similarly, Eqs. \eqref{eq:f4} and \eqref{eq:f6} can be used to derive three further noncontextuality constraints:
\begin{align}
    \OnticXWeakMeasureAProb{j}(x|\lambda) &= (1 - \ProbabilityOfMeasurement)\frac{1}{2} + \ProbabilityOfMeasurement \OnticMeasureAProb{j}(x|\lambda),\\
    \sum_{x\in\{\pm1\}}  \OnticXWeakMeasureAProb{j}(x,\lambda'|\lambda) &= (1-\ProbabilityOfDisturbance) \delta(\lambda -\lambda') + \ProbabilityOfDisturbance \OnticDTransformation{j}(\lambda'|\lambda),\\
    \sum_{y\in\{\pm1\}}  \OnticYWeakMeasureAProb{j}(y,\lambda'|\lambda) &= (1-\ProbabilityOfDisturbance) \delta(\lambda -\lambda') + \ProbabilityOfDisturbance \OnticDTransformation{j}(\lambda'|\lambda) . \label{eq:NonContextualityConstraint4}
\end{align}
\noindent Here, $\OnticXWeakMeasureAProb{j}(x,\lambda'|\lambda)$ and $\OnticYWeakMeasureAProb{j}(y,\lambda'|\lambda)$ are the probability distributions representing the $X$-type and $Y$-type weak measurements of $\projA{j}$, respectively. Furthermore,  $\OnticXWeakMeasureAProb{j}(x|\lambda)$ is the probability distribution representing the $X$-type weak measurement of $\projA{j}$ followed by a discarding of the system; $\OnticMeasureAProb{j}(x|\lambda)$ is the probability distribution representing the measurement $\projA{j}$; and $\OnticDTransformation{j}(\lambda'|\lambda)$ is the transition matrix representing the quantum channel $\DChannel{j}$. Note II in the Supplementary Material provides detailed derivations of these constraints.

Any noncontextual hidden-variable model for the six protocols must satisfy Eqs. \eqref{eq:NonContextualityConstraint1} to \eqref{eq:NonContextualityConstraint4}. For certain states $\rho$, this might not be possible, making the realization of the protocols contextual. In the aforementioned protocols, the KD-nonpositivity $\NPr$ is a faithful witness of contextuality:
\begin{Theorem}\label{Thm:KDNegativityFaithfullyWitnessContextuality}  Assume that $\epsilon \ll 1$ in the six protocols.
\begin{itemize}
    \item If $\rho$ is such that $\NPr>3d^2 \epsilon$, then these protocols do not admit a noncontextual hidden-variable model.
    \item If $\rho$ is KD-positive $[\NPr=0]$, then these protocols admit a noncontextual hidden-variable model.
\end{itemize}
\end{Theorem}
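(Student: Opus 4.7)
\emph{KD-positive $\Rightarrow$ noncontextual model exists.} When $\NPr=0$, the values $\{\KD{j,k}(\rho)\}_{j,k}$ form a bona fide probability distribution on $\Lambda=\{(j,k):1\le j,k\le d\}$. I would take $\mu_{\rho}(j,k)=\KD{j,k}(\rho)$, represent each projector $\projA{j'}$ and $\projB{k'}$ by the deterministic indicator on $\Lambda$, define each weak-measurement response function as a convex mixture of these indicators with a uniform-random bit (weighted by $\ProbabilityOfMeasurement$ and $\ProbabilityOfDisturbance$ so that Eqs.~\eqref{eq:NonContextualityConstraint1}--\eqref{eq:NonContextualityConstraint4} hold by construction), and choose $\Gamma_{D_{j}}$ as any stochastic matrix reproducing Eq.~\eqref{eq:f6}. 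Substitution into Eq.~\eqref{eq:HowHVMMAkePredictions} then recovers Eqs.~\eqref{eq:f1}--\eqref{eq:f5} because $\sum_{j}\KD{j,k}(\rho)=\Tr(\projB{k}\rho)$ and $\sum_{k}\KD{j,k}(\rho)=\Tr(\projA{j}\rho)$ deliver the correct marginals.

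\emph{KD-nonpositive $\Rightarrow$ contextuality (contrapositive).} Assume a noncontextual model reproduces the quantum statistics; I will derive $\NPr\le 3d^{2}\epsilon+O(\epsilon^{2})$. Parameterize
\[
\xi^{\mathrm{WX}}_{\projA{j}}(x,\lambda'|\lambda)=\tfrac{1}{2}\bigl[(1-\ProbabilityOfDisturbance)\delta(\lambda-\lambda')+\ProbabilityOfDisturbance\,\Gamma_{D_{j}}(\lambda'|\lambda)\bigr]+\tfrac{x}{2}\,\tilde\eta_{j}(\lambda',\lambda),
\]
so that Eqs.~\eqref{eq:NonContextualityConstraint1}--\eqref{eq:NonContextualityConstraint4} impose $\sum_{\lambda'}\tilde\eta_{j}(\lambda',\lambda)=\ProbabilityOfMeasurement(2\xi_{\projA{j}}(+1|\lambda)-1)$, while non-negativity of $\xi^{\mathrm{WX}}$ gives $|\tilde\eta_{j}(\lambda',\lambda)|\le(1-\ProbabilityOfDisturbance)\delta(\lambda-\lambda')+\ProbabilityOfDisturbance\Gamma_{D_{j}}(\lambda'|\lambda)$. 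Exploiting the scaling asymmetry $\ProbabilityOfDisturbance=\sin^{2}\epsilon=O(\epsilon^{2})$ versus $\ProbabilityOfMeasurement=\sin 2\epsilon=O(\epsilon)$, all off-diagonal mass of $\tilde\eta_{j}$ is $O(\epsilon^{2})$ and its diagonal is pinned to $\ProbabilityOfMeasurement(2\xi_{\projA{j}}(+1|\lambda)-1)+O(\epsilon^{2})$. Substituting into the HVM expression for $\OutcomePMF{2}{j,k}(+1,+1)$ and matching Eq.~\eqref{eq:f2} yields, after dividing by $2\epsilon$,
\[
\Re\,\KD{j,k}(\rho)=\!\int\!d\lambda\,\mu_{\rho}(\lambda)\,\xi_{\projA{j}}(+1|\lambda)\,\xi_{\projB{k}}(+1|\lambda)+O(\epsilon),
\]
whose right-hand side is manifestly non-negative, so $\Re\,\KD{j,k}(\rho)\ge-O(\epsilon)$. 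The parallel argument for $\xi^{\mathrm{WY}}_{\projA{j}}$ is tighter: the marginal now vanishes ($\sum_{\lambda'}\check\eta_{j}=0$), which combined with the pointwise bound forces $|\check\eta_{j}|=O(\ProbabilityOfDisturbance)=O(\epsilon^{2})$ everywhere, so Eq.~\eqref{eq:f3} yields $|\Im\,\KD{j,k}(\rho)|\le O(\epsilon)$. Combining these with $\sum_{j,k}\Re\,\KD{j,k}(\rho)=1$ and the triangle inequality $|\KD{j,k}|\le|\Re\,\KD{j,k}|+|\Im\,\KD{j,k}|$ gives $\NPr\le 3d^{2}\epsilon+O(\epsilon^{2})$, contradicting the hypothesis.

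\emph{Main obstacle.} The delicate step is the pointwise localization of $\tilde\eta_{j}$ and $\check\eta_{j}$ together with careful tracking of $O(\epsilon^{2})$ residuals through Protocols 2 and 3; otherwise the subsequent division by $2\epsilon$ would amplify a lossy estimate into a useless one and spoil the advertised constant $3$. Part 2 is structurally simpler, but still requires verifying all six outcome distributions \eqref{eq:f1}--\eqref{eq:f6}.
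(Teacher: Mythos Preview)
Your proposal is correct in outline but takes a genuinely different route from the paper on both halves, and your Part~2 sketch has a gap worth flagging.

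\textbf{Part 1.} The paper does not parameterize $\xi^{\mathrm{WX}}$, $\xi^{\mathrm{WY}}$ as you do. Instead, for a fixed pair $(j,k)$ with $\Re Q_{j,k}<0$ (resp.\ $\Im Q_{j,k}\neq 0$), it bounds the single probability $f^{(2)}_{j,k}(-1,+1)$ (resp.\ $f^{(3)}_{j,k}(+1,+1)$) from above in any noncontextual model by splitting the $\lambda'$-integral over the set $I_\lambda=\{\lambda':\xi_{\Pi_k}(+1|\lambda')\le\xi_{\Pi_k}(+1|\lambda)\}$ and invoking the constraints directly; comparison with the quantum value gives $\epsilon\ge -\Re Q_{j,k}$ (resp.\ $\epsilon\ge|\Im Q_{j,k}|$). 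A separate pigeonhole lemma converts $\mathcal N(\rho)>\delta$ into the existence of such a bad $(j,k)$ with magnitude $\delta/(3d^2)$, which is where the constant $3d^2$ enters cleanly. Your localization argument is more structural: it yields the representation $\Re Q_{j,k}=\int\mu_\rho\,\xi_{P_j}(+1|\lambda)\,\xi_{\Pi_k}(+1|\lambda)+O(\epsilon)$, which is strictly more informative than the paper's inequality and makes the non-negativity transparent. The price is that the constant must be tracked through several $O(p_{\mathrm d})$ residuals and a division by $p_{\mathrm m}$; the paper's route isolates the constant in the elementary pigeonhole step and keeps the analytic part constant-free. One technical caveat: your ``diagonal vs.\ off-diagonal'' decomposition of $\tilde\eta_j$ presumes a singular-plus-absolutely-continuous splitting that is immediate for discrete $\Lambda$ but needs a word of justification for general measurable $\Lambda$; the paper's $I_\lambda$ argument sidesteps this entirely.

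\textbf{Part 2.} Here the paper takes the ontic space to be $\Lambda=\{1,\dots,d\}$ with $\mu_\rho(\lambda)=\Tr(\Pi_\lambda\rho)$---only the $B$-label, not the full KD grid---and puts the KD information into the response functions via the weak values $w_{j,\lambda}=Q_{j,\lambda}(\rho)/\Tr(\Pi_\lambda\rho)$, with $\xi^{\mathrm{WX}}_{P_j}(x,\lambda'|\lambda)=\delta_{\lambda\lambda'}\bigl[\tfrac12(1-p_{\mathrm d})+\tfrac12 x\,p_{\mathrm m}(2\Re w_{j,\lambda'}-1)\bigr]+\tfrac12 p_{\mathrm d}\,\phi_j(\lambda'|\lambda)$ for a suitably chosen stochastic $\phi_j$. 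Your choice $\Lambda=\{(j,k)\}$ with $\mu_\rho=Q(\rho)$ is the more canonical one and can be made to work, but your description ``convex mixture of indicators with a uniform bit, weighted by $p_{\mathrm m}$ and $p_{\mathrm d}$'' is not enough: the obvious product ansatz $\xi^{\mathrm{WX}}(x,\lambda'|\lambda)=\bigl[(1-p_{\mathrm m})\tfrac12+p_{\mathrm m}\xi_{P_j}(x|\lambda)\bigr]\cdot\bigl[(1-p_{\mathrm d})\delta_{\lambda\lambda'}+p_{\mathrm d}\Gamma_{D_j}\bigr]$ satisfies the noncontextuality constraints but reproduces $f^{(2)}$ only up to $O(p_{\mathrm m}p_{\mathrm d})=O(\epsilon^3)$ cross-terms, not exactly. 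To get exact agreement you must, as the paper does, correlate the $x$-bit with the \emph{diagonal} part of the transition only; this is the missing idea in your sketch.
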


The theorem's first part is a KD-rephrasing of the result proven in \cite{Pusey14,kunjwal2019anomalous}. The original result is written in terms of weak values instead of the KD distribution. In Note III of the Supplementary Material, we provide a proof along similar lines, but in terms of the KD distribution, and with the exact bound on $\epsilon$. 

We prove the second part of the theorem by explicitly constructing a noncontextual hidden-variable model that has the same predictions as quantum theory [Eqs. \eqref{eq:f1} to \eqref{eq:f3}], and also obeys the Noncontextuality Constraints [Eqs. \eqref{eq:NonContextualityConstraint1} to \eqref{eq:NonContextualityConstraint4}]. The set of hidden variables we use is $\Lambda = \{ 1, 2, \ldots, d \}$, and we represent the preparation of $\rho$ by the outcome probability distribution when measuring the observable $B$: $\mu_{\rho}(\lambda = j) = \Tr(\rho \projB{j})$. In our noncontextual hidden-variable model, the other procedures are represented by the following probability distributions:
\begin{align}\label{eq:OnticMeasureBProbInHVM}
    \OnticMeasureBProb{k}(z|\lambda) = &
    \begin{cases}
        \delta_{k\lambda} &\text{ if } z = +1\\
        1 - \delta_{k\lambda} &\text{ if } z = -1,
    \end{cases} \\
    \label{eq:AnstazForWX}
    \OnticXWeakMeasureAProb{j}(x,\lambda'|\lambda) =& \delta_{\lambda\lambda'}\frac{1}{2}(1-\ProbabilityOfDisturbance)  + \frac{1}{2}\ProbabilityOfDisturbance \UndetermindedStochasticMatrix{j}(\lambda'|\lambda)\nonumber\\ &+ \frac{1}{2}x \ProbabilityOfMeasurement \delta_{\lambda\lambda'} (2\Re \frac{\KD{j,\lambda'}(\rho)}{\Tr(\projB{\lambda'}\rho)}-1), \\
    \label{eq:AnsatzForWY}
    \OnticYWeakMeasureAProb{j}(y,\lambda'|\lambda) = &  \delta_{\lambda\lambda'}\frac{1}{2}(1-\ProbabilityOfDisturbance)  + \frac{1}{2}\ProbabilityOfDisturbance \UndetermindedStochasticMatrix{j}(\lambda'|\lambda) 
\end{align}
Here, $\UndetermindedStochasticMatrix{j}(\lambda'|\lambda)$ is an arbitrary stochastic matrix. Straightforward analysis shows that these probability distributions reproduce  Predictions \eqref{eq:f1} to \eqref{eq:f3}. Furthermore, by substituting Eqs. \eqref{eq:OnticMeasureBProbInHVM} to \eqref{eq:AnsatzForWY} into the Noncontextuality Constraints \eqref{eq:NonContextualityConstraint1} to \eqref{eq:NonContextualityConstraint4}, one can confirm that it is possible to construct probability distributions $\OnticMeasureAProb{j}(x|\lambda)$ and $\OnticDTransformation{j}$ such that all Noncontextuality Constraints are satisfied; see Note IV of the Supplementary Material. 

\textit{Experiment and analysis:---}We now show our main result: We construct an experiment, the contextuality of which can be verified using only noncontextual procedures. We consider two experimenters, Alice and Bob, each performing an experiment:
\begin{enumerate}
    \item Alice chooses a sequence of $N$ pure quantum states $(\psi_j)_{j=1}^N$ (which may contain duplicates) such that these quantum states form an exotic state when mixed:
        \begin{equation}
            \exot = \frac{1}{N}\sum_{j=1}^N \psi_j.
        \end{equation}
    \item Alice sends the sequence of quantum states to Bob $M \gg 1$ times, each time keeping the order of the sequence secret. 
    \item For each state Bob receives, he randomly performs one of the six protocols.
    \item Bob publicly announces which protocol he chose and what outcome he obtained for each of the $N\times M$ states he received.
\end{enumerate}
Alice's experiment involves preparing pure states, sending them to Bob, and recording the outcome that Bob announces. Bob's experiment involves taking input states prepared by Alice, and conducting one of his six protocols. Figure \ref{fig:AliceAndBobExperiment} illustrates these experiments. 

Bob lacks knowledge of the order in which Alice has sent her states and thus effectively receives systems in the exotic state $\exot$. From his measurements, Bob obtains the outcome-probability distributions for each measurement procedure [Eqs. \eqref{eq:f1} to \eqref{eq:f6}] and determines $Q(\exot)$. Since $\mcl{N}(\exot)=0$, Bob concludes, via Theorem \ref{Thm:KDNegativityFaithfullyWitnessContextuality}, that there exists a noncontextual hidden-variable model describing his experiment.

However, knowledge of $Q(\exot)$ allows for the informationally complete reconstruction of $\exot$ \cite{arvidssonshukur2024properties}. Thus, Bob knows that $\exot$ is an exotic state. In Note V of the Supplementary material, we show that Bob can find a $\delta > 0 $ such that any pure state decomposition of $\exot$ has at least one state $\psi_-$ such that $\mcl{N}(\psi_-) > \delta $. The outcomes of Bob's measurements are public, and Alice can analyze them for the trials where she prepared $\psi_-$. Bob then checks if $\delta> 3d^2 \epsilon$. If so, using Theorem $1$, Bob deduces that there is no noncontextual hidden-variable model for Alice's postselected data. That is, Bob has verified that Alice must have performed a contextual experiment. 

\textit{Discussion:---}To summarize, we have constructed a noncontextual experiment (Bob's) in which an experimenter can verify that another experiment (Alice's) is contextual. In a sense, it is possible to verify the existence of a quantum-classical boundary from its classical side.

The nature of the exotic state $\exot$ (a KD-positive state that cannot be written as a convex combination of pure KD-positive states) is crucial to our result. We can compare the above-described scenario to one where Alice instead sends Bob states that average to the maximally mixed state $I/d$. Again, Bob finds his experiment to be noncontextual. However, because the maximally mixed state can be written as a convex combination of KD-positive states (for example, the eigenstates of $A$), Bob cannot deduce that Alice would have verified contextuality, and thus cannot himself verify contextuality.

Furthermore, Alice may be abstracted away in our setup. Bob effectively receives a mixed state because of his limited knowledge. Nevertheless, he may infer that information exists that could ``unmix'' the states he has received. Thus, Bob can verify that a contextual experiment will have happened for any observer that has access to this information.

Our result has an analogy in entanglement theory \cite{Vedral2006, Horodecki2009}. Consider a bipartite state $\rho_{\tt C,D}$ spatially split between Charlie and Dave.  If $\rho_{ \tt C,D}$ is not entangled, then there exists a local hidden-variable model that describes the outcomes of any measurement conducted by Charlie and Dave \cite{Horodecki2009}. The converse, however, is not true: The two subsystems can be entangled and yet admit a local hidden-variable model \cite{Werner1989, Wiseman2007}. Similarly, if Bob's experiment does not allow for the verification of quantum theory's contextuality, then it admits a noncontextual hidden-variable model. As we have shown, the converse, however, is not true: Bob's experiment can be used to verify contextuality and yet admit a noncontextual hidden-variable model.

We can also compare the contextuality witness of our analysis with that of entanglement theory. We witness contextuality with KD-nonpositivity $\NPr$. In entanglement theory, the von Neumann entropy $S(\rho_{\tt C}) = S(\rho_{\tt D})$, where $\rho_{\tt C} = \Tr_{ \tt D} (\rho_{\tt C,D})$ etc., quantifies the entanglement of a pure state $\rho_{\tt C,D}$ \cite{Neumann1927, Vedral2006, Horodecki2009}. The von Neumann entropy $S(\rho_{ \tt C})$ is a \textit{concave} function in $\rho_{\tt C,D}$. For some states, $S(\rho_{\tt C})$ exceeds $0$ even if $\rho_{\tt C,D}$ can be written as a convex combination of pure nonentangled states. Therefore, $S(\rho_{\tt C})$ is a poor measure of mixed-state entanglement. The total KD-nonpositivity $\NPr$ is a \textit{convex} function in $\rho$. For some (exotic) states $\exot$, $\mcl{N}(\exot)$  equals $0$ even if $\exot$ cannot be written as a convex combination of pure KD-positive states. As discussed above, this means that $\NPr$ can be a poor witness of contextuality in our settings. 

In entanglement theory, one can quantify the smallest pure-state-level entanglement of a mixed state $\rho_{\tt C,D}$ via the convex roof of the von Neumann entropy $S(\rho_{ \tt C})$ \cite{Hill1997, Wooters1998, Horodecki2009}. Similarly, one can quantify the smallest pure-state-level of KD-nonpositivity of a mixed state $\rho$ via the convex roof of $\NPr$ \cite{langrenez2024convexroofs}. As we have seen above, such a pure-state analysis can be useful for verifying contextuality. 

Finally,  our result assumes quantum theory. Bob needs to reconstruct the state from his data in order to conclude that it is exotic. To do this, Bob needs to derive Eqs. \eqref{eq:f1} to \eqref{eq:f3}, which requires quantum theory. Future research may strengthen our results to apply to general theories, beyond quantum theory.

\begin{acknowledgments}
Acknowledgments: JJT was supported by the Cambridge Trust. WS was supported by the EPRSC and Hitachi. This work was supported in part by the Agence Nationale de la Recherche under grant ANR-11-LABX-0007-01 (Labex CEMPI), by the Nord-Pas de Calais Regional Council and the European Regional Development Fund through the Contrat de Projets \'Etat-R\'egion (CPER), and by the CNRS through the MITI interdisciplinary programs. SDB thanks Girton College, where part of this work was performed, for its hospitality. The authors thank Christopher K. Long for his helpful comments.
\end{acknowledgments}

\bibliography{Bibliography}

\pagebreak
\widetext
\newpage
\begin{center}
\textbf{\large Supplementary Material for \\ Contextuality Can be Verified with Noncontextual Experiments}
\end{center}
\setcounter{equation}{0}
\setcounter{figure}{0}
\setcounter{table}{0}
\setcounter{page}{1}
\makeatletter
\renewcommand{\theequation}{S\arabic{equation}}
\setcounter{section}{0}
\renewcommand{\thesection}{\Roman{section}}

\section{The Quantum Theoretical Descriptions and Predictions of the Protocols}\label{app:Quantum_Theoretical_Predictions}

In this note, we describe the protocols in Figure 2 of the article in detail. First, we give an exact description of our weak measurement. We then calculate the quantum theoretical predictions of the outcomes of Protocols 1 to 3. Last, we show that the outcome distributions of Protocols 4 to 6 match those obtained by marginalizing the outcome distributions of Protocols 2 and 3. 

\begin{figure}[h]
    \centering
    \begin{quantikz}
        \lstick{$\ket{\psi_\epsilon}$} & &\gate[2]{U = I\otimes\projA{j} + Z\otimes\projA{j}^{\perp} } &&\meter{\ket{\pm}}\\
        \lstick{$\rho$}&& & &
    \end{quantikz}

    \caption{\textbf{Weak-Measurement Circuit.} This circuit implements the X-type weak measurement on a system in an arbitrary state $\rho$ for the observable $P_j$. The pointer is initialized in the state $\ket{\psi_\epsilon}\coloneqq \cos \epsilon \ket{0} + \sin \epsilon \ket{1}$, where $\epsilon$ controls the strength of the weak measurement. The pointer is entangled with the system via the unitary $U = I\otimes\projA{j} + Z\otimes\projA{j}^{\perp}$. The pointer is then measured in the $X$ basis, giving the outcome $x \in \{\pm 1\}$. The circuit for the Y-type weak measurement is identical, except that the final measurement on the ancilla is replaced by a measurement in the $Y$ basis instead of the $X$ basis. The Y-type weak measurement returns the outcome $y \in \{\pm 1\}$.}
    \label{fig:weak-value-measurement-circuit}
\end{figure}
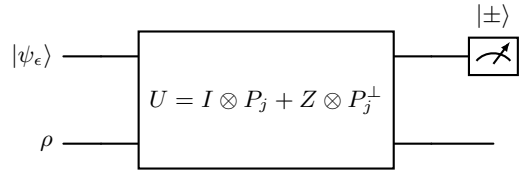

\subsection{The Weak Measurement}
We consider a weak measurement involving a qubit pointer system \cite{Jozsa2007}. The weak measurement works by entangling the system with an ancillary pointer system and then measuring the pointer in the $X$ or the $Y$ basis. The weak measurement thus returns a bit of information $x\in\{\pm 1\}$ or $y \in\{\pm1\}$. The strength of the weak measurement is given by the parameter $\epsilon$, which we take to be small. The details are given in Figure \ref{fig:weak-value-measurement-circuit}. 

In quantum theory, the weak measurement is represented by Kraus operators. For the X-type weak measurement given in Figure \ref{fig:weak-value-measurement-circuit}, the Kraus operators are
\begin{equation}\label{eq:KrausOperatorsXPointersup}
    \wvrkraus{x,j} \coloneqq \bra{X = x}U\ket{\psi_\epsilon} = \frac{1}{\sqrt{2}}(\cos \epsilon I + x \sin \epsilon \Doperator{j}),
\end{equation}
\noindent where $\Doperator{j} \coloneqq \projA{j} - \projA{j}^{\perp}$. Similarly, one may also calculate the Kraus operators for the Y-type weak measurement:
\begin{equation}\label{eq:KrausOperatorYPointersup}
        \wvikraus{y,j} \coloneqq \bra{Y = y}U\ket{\psi_\epsilon} = \frac{1}{\sqrt{2}}(\cos \epsilon I -i y \sin \epsilon \Doperator{j}).
\end{equation}

\subsection{Protocols 1, 2 and 3}
We now use quantum theory to calculate the outcome distributions for Protocols 1 to 3 in Figure 2 of the article. We denote the outcome probability distributions by $\OutcomePMF{1}{k}(z), \OutcomePMF{2}{j,k}(x,z)$ and $\OutcomePMF{3}{j,k}(y,z)$ for each protocol, respectively.

We start by calculating the outcome probabilities for Protocol 1. This protocol involves measuring whether the system is in the $k$th eigenspace of observable $B$, and then returning $+1$ if this is the case and $-1$ if this is not the case. Finding the outcome $+1$ thus corresponds to applying the projector $\projB{k}$, and finding the outcome $-1$ thus corresponds to applying the projector $\projB{k}^{\perp} = I - \projB{k}$. Recall that in the article, we introduced the notation
\begin{equation}
    \projB{k}^z = \begin{cases}
        \projB{k},      &z=+1,\\
        I - \projB{k},  &z=-1.
    \end{cases}
\end{equation}
The outcome probability distribution for Protocol 1 is thus
\begin{equation}
    \OutcomePMF{1}{k}(z) = \Tr(\projB{k}^{z} \rho) \coloneqq  \ProbOfBOnExot{k}{z}.
\end{equation}

We now calculate the outcome probability distribution $\OutcomePMF{2}{j,k}(x,z)$ for Protocol 2. The protocol involves performing an X-type weak measurement and then measuring whether or not the system is in the $k$th eigenspace of observable $B$, returning $+1$ if this is the case and $-1$ if it is not. The first measurement corresponds to applying the Kraus operators [see Eq. \eqref{eq:KrausOperatorsXPointersup}], and the second measurement corresponds to applying the same projectors as for Protocol 1. We find that 
\begin{align}
    \OutcomePMF{2}{j,k}(x,z)    &= \bP(x)\bP(z|x),\nonumber\\
                &= \Tr(\wvrkraus{x,j}\rho\wvrkraus{x,j}^\dag)\Tr\left[\projB{k}^z\frac{\wvrkraus{x,j}\rho\wvrkraus{x,j}^\dag}{\Tr(\wvrkraus{x,j}\rho\wvrkraus{x,j}^\dag)}\right],\nonumber\\
                &= \Tr(\projB{k}^z\wvrkraus{x,j}\rho\wvrkraus{x,j}^\dag),\nonumber\\
                &=\frac{1}{2}(1-\ProbabilityOfDisturbance)\ProbOfBOnExot{k}{z} + \frac{1}{2}x\ProbabilityOfMeasurement\ProbOfBOnExot{k}{z}[2\Re (w^z_{j,k}) - 1] + \frac{1}{2}\ProbabilityOfDisturbance \ProbOfBOnDExotD{j,k}{z}.\label{eq:f2sup}
\end{align}
\noindent Here, we have defined the probabilities $\ProbabilityOfMeasurement \coloneqq \sin(2\epsilon)$ and $\ProbabilityOfDisturbance \coloneqq \sin^2(\epsilon)$. Moreover, $\ProbOfBOnExot{k}{z}$ is the outcome probability of measuring whether or not the system is in the $k$th eigenspace, and $\ProbOfBOnDExotD{j,k}{z} \coloneqq \Tr(\projB{k}^{z} D_j \rho D_j^{\dagger})$ is the outcome probability of measuring whether or not the system is in the $k$th eigenspace of the observable $B$ after applying the $D_j$ channel to the state. We have also introduced the notation 
\begin{equation}
    w^z_{j,k} = 
    \begin{cases}
        w_{j,k} &\text{ if } z = 1 \\
        \frac{\sum_{m\neq k}w_{j,m}p_{+1}^{m}}{\sum_{m\neq k}p_{+1}^{m}} &\text{ if } z = 0,
    \end{cases}
\end{equation}
\noindent where $w_{j,k}$ is the weak-value matrix \cite{Aharanov88,Duck89,Jozsa2007,Dressel14} given by
\begin{equation}\label{eq:TheWeakValues}
    w_{j,k} \coloneqq \frac{\Tr( \projB{k}\projA{j}\rho)}{\Tr(\projB{k}\rho)} = \frac{\KD{j,k}(\rho)}{\Tr(\projB{k}\rho)}.
\end{equation}

The outcome probabilities for Protocol 3 can then be calculated analogously to those for Protocol 2, except that we need to use the Kraus operators $\wvikraus{y,j}$ defined in Eq. \eqref{eq:KrausOperatorYPointersup} instead of $\wvrkraus{x,j}$:
\begin{equation}\label{eq:f3sup}
    \OutcomePMF{3}{j,k}(y,z) = \Tr (\projB{k}^{z}\wvikraus{y,j}\rho\wvikraus{y,j}^{\dagger}) = \frac{1}{2}(1-\ProbabilityOfDisturbance)\ProbOfBOnExot{k}{z} + y\ProbabilityOfMeasurement\ProbOfBOnExot{k}{z}\Im(w^z_{j,k})  + \frac{1}{2}\ProbabilityOfDisturbance \ProbOfBOnDExotD{j,k}{z}.
\end{equation}

Equations (7) and (8) of the article then follow upon approximating the right-hand side in Eqs. \eqref{eq:f2sup} and \eqref{eq:f3sup} to first order in $\epsilon$ and using Eq. \eqref{eq:TheWeakValues} to write the result in terms of KD distributions.

\subsection{Protocols 4, 5 and 6}
We now show that, according to quantum theory, the outcome probability distributions for Protocols 4 to 6 can be obtained by marginalizing the outcome distributions of Protocols 2 and 3. Thus, if quantum theory is correct, this means that procedures (preparations, transformations and measurements) within these protocols are experimentally indistinguishable. Consequently, we can establish noncontextuality constraints for any noncontextual hidden-variable model of the protocols.

First, we show that summation of the $z$ variable in $\OutcomePMF{2}{j,k}(x,z)$ yields $\OutcomePMF{4}{j}(x)$. Tracing over the $z$ outcome in Procedure 2 corresponds to discarding the system after performing the X-type weak measurement. We thus effectively implement the following POVM elements by tracing over $z$:
\begin{equation}\label{eq:POVMOperatorsXWeakMeasurement}
    \wvrkraus{x,j}^{\dagger}\wvrkraus{x,j} = (1 - \ProbabilityOfMeasurement)\frac{I}{2} + \ProbabilityOfMeasurement \projA{j}^{x}.
\end{equation}
\noindent Here we introduced the notation $\projA{j}^{x} = {\projA{j}}$ if $x = +1$ and $\projA{j}^{x}$ $ = {\projA{j}^{\perp}}$ if $x = -1$. Thus, we find that measuring the weak value and discarding the system is equivalent to with probability $1 - \ProbabilityOfMeasurement$ returning $\pm1$ uniformly at random, and with probability $\ProbabilityOfMeasurement$ measuring whether or not the system is in the $j$th eigenspace of the operator $A$ and returning the result. Therefore, the right-hand side of Eq. \eqref{eq:POVMOperatorsXWeakMeasurement} exactly corresponds to the measurement process in Protocol 4:
\begin{equation}\label{eq:f2summedoverzgivesf4}
    \sum_{z}\OutcomePMF{2}{j,k}(x,z) = \OutcomePMF{4}{j}(x).
\end{equation}

We now proceed similarly to show that the summing over $z$ in $\OutcomePMF{3}{j,k}(y,z)$ yields $\OutcomePMF{5}{ }(y)$. As before, summing over the $z$ variable in Protocol 3 corresponds to the measurement process of discarding the system after performing a Y-type weak measurement. The POVM elements of this process are
\begin{equation}
    \wvikraus{y}\wvikraus{y}^{\dagger} = \frac{1}{2}I.
\end{equation}

The right-hand side of this equation corresponds to sampling $y$ uniformly at random. This is exactly what is done in Protocol $5$. We thus find that
\begin{equation}
    \sum_z \OutcomePMF{3}{j,k}(y,z) = \OutcomePMF{5}{ }(y).
\end{equation}

Next, we show that summing over $x$ in the outcome probability distribution $\OutcomePMF{2}{j,k}(x,z)$ yields the outcome probability distribution $\OutcomePMF{6}{j,k}(z)$. Consider Protocol $2$ without the projective measurement $\projB{k}$. Tracing over $x$ corresponds to implementing the quantum channel
\begin{equation}\label{eq:TheQuantumChannelInQT}
    \rho \mapsto \sum_x \wvrkraus{x,j}\rho\wvrkraus{x,j}^{\dagger} = (1-\ProbabilityOfDisturbance)\rho + \ProbabilityOfDisturbance \Doperator{j}\rho\Doperator{j}^{\dagger},
\end{equation}
\noindent  $D_j$ is unitary, so this is a well-defined quantum channel. The right-hand side of Eq. \eqref{eq:TheQuantumChannelInQT} is exactly the transformation process we apply in Protocol 6 before measuring $\projB{k}$. We thus find that
\begin{equation}
    \sum_x \OutcomePMF{2}{j,k}(x,z) = \OutcomePMF{6}{j,k }(z).
\end{equation}

A similar calculation can be used to show that
\begin{equation}
    \sum_y \OutcomePMF{3}{j,k}(y,z) = \sum_x \OutcomePMF{2}{j,k}(x,z) = \OutcomePMF{6}{j,k }(z).
\end{equation}

\section{Noncontextuality Constraints Established by Protocols 1 to 6}

\begin{table}
    \centering
        \begin{tabular}{|c|c|c|c|}
          \hline
          \textbf{Description}    &   \textbf{Pictogram}    &   \textbf{Quantum Model} & \textbf{Hidden-Variable Model}
          \\ \hline
          Prepare the  state &\makecell[c]{\vspace{5pt} \\ \hspace{5pt} 
          \begin{tikzpicture}[]
            \draw (0,0) node[left] {$\rho$} -- (1,0) ;
          \end{tikzpicture} \hspace{5pt} \\ \vspace{5pt} }  & Density Matrix $\rho$ & $\mu_{\rho}(\lambda)$
          \\ \hline
          \makecell{Measurement of whether \\the system is in the $j$th \\ eigenspace of the observable $A$} &
          \makecell[c]{\vspace{5pt} \\ \hspace{5pt} \begin{tikzpicture}[]
                \draw (-0.5,0.5) -- (0,0.5);
                \draw (0,0) rectangle (2,1);
                \draw (1.1,0.3) arc[start angle=0, end angle=180, radius=0.4];
                \draw (0.7,0.3) -- (1.0,0.7);
                \node at (1.6,0.45) {$\projA{j}$};
          \end{tikzpicture}\hspace{5pt} \\ \vspace{5pt}}      & \makecell{Projectors \\$\{\projA{j}, \projA{j}^{\perp} = I - \projA{j}\}$}&$\OnticMeasureAProb{j}(z|\lambda)$
          \\ \hline
          \makecell{Measurement of whether \\the system is in the $k$th \\ eigenspace of the observable $B$} &
          \makecell[c]{\vspace{5pt} \\ \hspace{5pt} \begin{tikzpicture}[]
                \draw (-0.5,0.5) -- (0,0.5);
                \draw (0,0) rectangle (2,1);
                \draw (1.1,0.3) arc[start angle=0, end angle=180, radius=0.4];
                \draw (0.7,0.3) -- (1.0,0.7);
                \node at (1.6,0.45) {$\projB{k}$};
          \end{tikzpicture}\hspace{5pt} \\ \vspace{5pt}}      & \makecell{Projectors \\$\{\projB{k}, \projB{k}^{\perp} = I - \projB{k}\}$}&$\OnticMeasureBProb{k}(z|\lambda)$
          \\ \hline
          \makecell{Weak measurement \\ of $\projA{j}$  using an \\ X-type pointer} &
          \makecell[c]{\vspace{5pt} \\ \hspace{5pt} \begin{tikzpicture}[]
                \draw (-0.5,0.5) -- (0,0.5);
                \draw (0,0) rectangle (2.3,1);
                \draw (1.1,0.3) arc[start angle=0, end angle=180, radius=0.4];
                \draw (0.7,0.3) -- (1.0,0.7);
                \node at (1.7,0.5) {$W^{(X)}_{\projA{j}}$};
                \draw (2.3,0.5) -- (2.8,0.5);
          \end{tikzpicture}\hspace{5pt} \\ \vspace{5pt}}      & \makecell{Kraus Operators \\$\{\wvrkraus{x,j}\}_{x\in \{\pm1\}}$}&$\OnticXWeakMeasureAProb{j}(x,\lambda'|\lambda)$
          \\ \hline
          \makecell{Weak measurement \\ of $\projA{j}$  using a \\ Y-type pointer} &
          \makecell[c]{\vspace{5pt} \\ \hspace{5pt} \begin{tikzpicture}[]
                \draw (-0.5,0.5) -- (0,0.5);
                \draw (0,0) rectangle (2.3,1);
                \draw (1.1,0.3) arc[start angle=0, end angle=180, radius=0.4];
                \draw (0.7,0.3) -- (1.0,0.7);
                \node at (1.7,0.5) {$W^{(Y)}_{\projA{j}}$};
                \draw (2.3,0.5) -- (2.8,0.5);
          \end{tikzpicture}\hspace{5pt} \\ \vspace{5pt}}      & \makecell{Kraus Operators \\$\{\wvikraus{y,k}\}_{y\in \{\pm1\}}$}&$\OnticYWeakMeasureAProb{j}(y,\lambda'|\lambda)$
          \\ \hline
          \makecell{Quantum Channel $\DChannel{j}$} &
          \makecell[c]{\vspace{5pt} \\ \hspace{5pt} \begin{tikzpicture}[]
                \draw (-0.5,0.5) -- (0,0.5);
                \draw (0,0) rectangle (1.5,1);
                \node at (0.8,0.5) {$\DChannel{j}$};
                \draw (1.5,0.5) -- (2,0.5);
          \end{tikzpicture}\hspace{5pt} \\ \vspace{5pt}}      & \makecell{Quantum Channel \\$\DChannel{j}(\rho) = D_j \rho D_j^{\dagger}$\\ where $D_j = \projA{j} - \projA{j}^{\perp}$} &$\OnticDTransformation{j}(\lambda'|\lambda)$
          \\ \hline
          \end{tabular}
    \caption{\textbf{List of all procedures.} Each row contains a description of the procedure, a pictogram representing the procedure, the corresponding mathematical object in quantum theory, and the corresponding probability distribution in the hidden-variable model.}
    \label{fig:Procedures}
\end{table}
Here, we derive the constraints [Eqs. (12) to (15) of the article] any noncontextual hidden-variable model of Protocols 1 to 6 must satisfy. Equation (13) of the article was already derived in the article. Here we will derive Eqs. (12), (14) and (15).

For clarity, in Table \ref{fig:Procedures}, we give a list of all the procedures that are used in our six protocols. Any hidden-variable model describing those protocols must associate probability distributions to all procedures given in this table. The notation we use for every probability distribution is also outlined in this table. With a slight abuse of notation, we write $\OnticXWeakMeasureAProb{j}(x|\lambda)$ for the probability distribution corresponding to discarding the system after an $X$-type weak measurement. Therefore,

\begin{equation}
    \OnticXWeakMeasureAProb{j}(x|\lambda) = \int_\Lambda d\lambda'\OnticXWeakMeasureAProb{j}(x,\lambda'|\lambda).
\end{equation}

We start by deriving Noncontextuality Constraint (12) in the article, which follows from Eq. (9) in the article:
\begin{equation}
    \OutcomePMF{4}{j}(x) = \sum_z \OutcomePMF{2}{j,k}(x,z).
\end{equation}

\noindent This equation expresses the fact that the Protocols do not distinguish between (i) the procedure of measuring $\projA{j}$ returning $x$ with probability $\ProbabilityOfMeasurement$ and sampling $x$ uniformly at random with probability $1-\ProbabilityOfMeasurement$, and (ii) performing the $X$-type weak measurement and discarding the system afterward. Any noncontextual hidden-variable model must therefore assign the same probability distribution to these procedures:
\begin{equation}
    \OnticXWeakMeasureAProb{j}(x|\lambda) = (1 - \ProbabilityOfMeasurement)\frac{1}{2} + \ProbabilityOfMeasurement \OnticMeasureAProb{j}(x|\lambda).
\end{equation}

Next, we derive the Noncontextuality Constraint (14) in the article, which follows from the first equality in Eq. (11) in the article:
\begin{equation}
    \OutcomePMF{6}{j,k}(z) = \sum_x \OutcomePMF{2}{j,k}(x,z).
\end{equation}
\noindent This equation expresses that the Protocols do not distinguish between (i) the transformation procedure of performing the $X$-type weak measurement and forgetting the result, and (ii) the transformation procedure corresponding to applying the quantum channel $\DChannel{j}(\rho)$ with probability $\ProbabilityOfDisturbance$ and applying the identity channel with probability $1-\ProbabilityOfDisturbance$. Any noncontextual hidden-variable model must therefore assign the same probability distribution to these procedures:
\begin{equation}
    \sum_{x\in\{\pm1\}}  \OnticXWeakMeasureAProb{j}(x,\lambda'|\lambda) = (1-\ProbabilityOfDisturbance) \delta(\lambda -\lambda') + \ProbabilityOfDisturbance \OnticDTransformation{j}(\lambda'|\lambda).
\end{equation}

Last, we derive the Noncontextuality Constraint (15) in the article, which follows from the second equality in Eq. (11) in the article:
\begin{equation}
    \OutcomePMF{6}{j,k}(z) = \sum_y \OutcomePMF{3}{j,k}(y,z).
\end{equation}
\noindent This equation implies that the Protocols do not distinguish between (i) the transformation procedure of performing the $Y$-type weak measurement and forgetting the result, and (ii) the transformation procedure corresponding to applying the quantum channel $\DChannel{j}(\rho)$ with probability $\ProbabilityOfDisturbance$ and applying the identity channel with probability $1-\ProbabilityOfDisturbance$. Any noncontextual hidden-variable model must therefore assign the same probability distribution to these procedures:
\begin{equation}
    \sum_{y\in\{\pm1\}}  \OnticYWeakMeasureAProb{j}(y,\lambda'|\lambda) = (1-\ProbabilityOfDisturbance) \delta(\lambda -\lambda') + \ProbabilityOfDisturbance \OnticDTransformation{j}(\lambda'|\lambda) .
\end{equation}

\section{KD-Nonpositivity Implies Contextuality} \label{app:boundingforwardsdirection}
In this Note, we prove the first item of Theorem 1 in the article; we show that KD-nonpositivity implies contextuality for sufficiently small weak-measurement strengths $\epsilon$. This result is very closely related to the result in Refs. \cite{Pusey14, kunjwal2019anomalous}. These previous works phrased their claims in terms of weak values and not in terms of KD distributions. Furthermore, since the previous results follow from a Taylor expansion, they do not give an explicit bound on how small $\epsilon$ must be. We require such a bound to prove the main result of the article. 

We begin by connecting KD-nonpositivity to the existence of sufficiently negative or complex elements of the KD distribution.

\begin{Lemma}
    Let $\delta > 0$. If $\mcl{N}(\rho) > \delta$, then there exist indices $j,k$ such that either $\Re \KD{j,k}(\rho) < - \frac{\delta}{3d^2}$ or $|\Im \KD{j,k}(\rho)| > \frac{\delta}{3d^2}$.
\end{Lemma}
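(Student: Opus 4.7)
The plan is to prove the contrapositive: assume every entry satisfies $\Re \KD{j,k}(\rho) \geq -\delta/(3d^2)$ and $|\Im \KD{j,k}(\rho)| \leq \delta/(3d^2)$, and then deduce $\NPr \leq \delta$. The key observation that lets the proof go through is the normalization of the KD distribution. Since $\sum_j \projA{j} = \sum_k \projB{k} = I$, we have
\begin{equation}
    \sum_{j,k} \KD{j,k}(\rho) = \Tr\!\bigl(\sum_k \projB{k} \sum_j \projA{j}\, \rho\bigr) = \Tr(\rho) = 1,
\end{equation}
so $\sum_{j,k} \Re \KD{j,k}(\rho) = 1$ and $\sum_{j,k} \Im \KD{j,k}(\rho) = 0$. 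This lets one rewrite the nonpositivity as
\begin{equation}
    \NPr = \sum_{j,k} |\KD{j,k}(\rho)| - 1 = \sum_{j,k} \bigl(|\KD{j,k}(\rho)| - \Re \KD{j,k}(\rho)\bigr),
\end{equation}
a sum of manifestly nonnegative terms. This is the main structural step; after it, the argument is just term-by-term estimation.

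Next, I would bound each summand using $|z| \leq |\Re z| + |\Im z|$, which gives
\begin{equation}
    |\KD{j,k}(\rho)| - \Re \KD{j,k}(\rho) \leq \bigl(|\Re \KD{j,k}(\rho)| - \Re \KD{j,k}(\rho)\bigr) + |\Im \KD{j,k}(\rho)|.
\end{equation}
The first bracket is $0$ when $\Re \KD{j,k}(\rho) \geq 0$ and equals $-2\Re \KD{j,k}(\rho)$ when $\Re \KD{j,k}(\rho) < 0$. Under the hypothesis, in the latter case it is bounded by $2\delta/(3d^2)$. Combined with $|\Im \KD{j,k}(\rho)| \leq \delta/(3d^2)$, each term is at most $3 \cdot \delta/(3d^2) = \delta/d^2$.

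Summing the $d^2$ terms yields $\NPr \leq \delta$, contradicting $\NPr > \delta$. This establishes the lemma with the stated constant $3d^2$. There is no real obstacle: the only nontrivial idea is recognizing that the normalization $\sum_{j,k}\KD{j,k}(\rho) = 1$ converts the definition of $\NPr$ into a sum of nonnegative quantities that are each controlled directly by the real and imaginary parts of $\KD{j,k}(\rho)$. The factor of $3$ in the bound is the sum of the coefficients $2$ (from the negative real part) and $1$ (from the imaginary part), and the factor of $d^2$ comes from the number of $(j,k)$ pairs.
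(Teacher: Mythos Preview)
Your proof is correct and is essentially identical to the paper's: both argue by contrapositive, use the normalization $\sum_{j,k}\KD{j,k}(\rho)=1$ to rewrite $\NPr=\sum_{j,k}(|\KD{j,k}(\rho)|-\Re\KD{j,k}(\rho))$, apply $|z|\leq|\Re z|+|\Im z|$, and then bound each of the $d^2$ terms by $\delta/d^2$. Your explicit accounting of the constant $3=2+1$ is slightly more detailed than the paper's presentation, but the argument is the same.
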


\begin{proof}
    We prove the contrapositive statement: if $\rho$ is such that for all $j,k$, $\Re \KD{j,k}(\rho) \geq -\frac{\delta}{3d^2}$ and $|\Im \KD{j,k}(\rho)| \leq \frac{\delta}{3d^2}$, then $\mcl{N}(\rho) \leq \delta$. Using the fact that the KD distribution is normalized, we may write the nonpositivity as
    \begin{equation}
        \mcl{N}(\rho) = -1 + \sum_{j,k}|\KD{j,k}(\rho)| = \sum_{j,k}(|\KD{j,k}(\rho)| - \Re \KD{j,k}(\rho)).
    \end{equation}

    Applying the triangle inequality on the absolute value in the sum then yields
    \begin{equation}
        \mcl{N}(\rho) \leq \sum_{j,k}(|\Re\KD{j,k}(\rho)| - \Re \KD{j,k}(\rho) + |\Im \KD{j,k}(\rho)|).
    \end{equation}

    The result then follows upon using that $\Re \KD{j,k}(\rho) \geq -\frac{\delta}{3d^2}$ and $|\Im \KD{j,k}(\rho)| \leq \frac{\delta}{3d^2}$.
\end{proof}

Next, we show that if the KD distribution is complex, then the protocols must be contextual.

\begin{Lemma}\label{lem:eps_bound_imag_wv}
Suppose that for some state $\rho$, there exist $j,k$ so that $\Im \KD{j,k}(\rho) \neq 0$, then for $\epsilon < \min\{|\Im \KD{j,k}(\rho)|, \pi/4\}$, Protocols 1 to 6 do not admit a noncontextual hidden-variable model.
\end{Lemma}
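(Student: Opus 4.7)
The plan is to isolate a single linear combination of Protocol~3 outcome probabilities whose quantum value is linear in $\Im \KD{j,k}(\rho)$, and then show that any noncontextual hidden-variable model (HVM) forces this same combination to be $O(\sin^2\epsilon)$. Since Protocol~3 is the only protocol whose outcome distribution is linearly sensitive to $\Im \KD{j,k}(\rho)$ [cf.\ Eq.~\eqref{eq:f3sup}], it is the natural place to look.

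Concretely, I would set $S\coloneqq \OutcomePMF{3}{j,k}(+1,+1)-\OutcomePMF{3}{j,k}(-1,+1)$. The exact quantum expression \eqref{eq:f3sup} collapses to $S=2\sin(2\epsilon)\,\Im \KD{j,k}(\rho)$ because the $z=+1$ factor sends $\ProbOfBOnExot{k}{+1}\,w^{+1}_{j,k}$ to $\KD{j,k}(\rho)$. In any HVM,
\begin{equation*}
S=\int d\lambda\,d\lambda'\,\mu_\rho(\lambda)\,h(\lambda,\lambda')\,\OnticMeasureBProb{k}(+1|\lambda'),\quad h(\lambda,\lambda')\coloneqq\OnticYWeakMeasureAProb{j}(+1,\lambda'|\lambda)-\OnticYWeakMeasureAProb{j}(-1,\lambda'|\lambda).
\end{equation*}
Noncontextuality constraint \eqref{eq:NonContextualityConstraint1} forces $\int d\lambda'\,h(\lambda,\lambda')=0$, so I may freely subtract the $\lambda'$-independent quantity $\OnticMeasureBProb{k}(+1|\lambda)$ from $\OnticMeasureBProb{k}(+1|\lambda')$ inside the integral. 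Nonnegativity of $\OnticYWeakMeasureAProb{j}(\pm 1,\lambda'|\lambda)$ combined with constraint \eqref{eq:NonContextualityConstraint4} gives the pointwise majorisation $|h(\lambda,\lambda')|\le (1-\ProbabilityOfDisturbance)\delta(\lambda-\lambda')+\ProbabilityOfDisturbance \OnticDTransformation{j}(\lambda'|\lambda)$. The $\delta$ piece contributes nothing once the subtraction is in place, because the bracket $\OnticMeasureBProb{k}(+1|\lambda')-\OnticMeasureBProb{k}(+1|\lambda)$ vanishes at $\lambda=\lambda'$, leaving
\begin{equation*}
|S|\le \ProbabilityOfDisturbance\int d\lambda\,d\lambda'\,\mu_\rho(\lambda)\,\OnticDTransformation{j}(\lambda'|\lambda)\,\bigl|\OnticMeasureBProb{k}(+1|\lambda')-\OnticMeasureBProb{k}(+1|\lambda)\bigr|\le \ProbabilityOfDisturbance=\sin^2\epsilon,
\end{equation*}
where I used $\OnticMeasureBProb{k}(+1|\cdot)\in[0,1]$ and that $\OnticDTransformation{j}$ is a stochastic kernel.

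Combining the two estimates of $S$, an NC-HVM requires $2\sin(2\epsilon)|\Im\KD{j,k}(\rho)|\le \sin^2\epsilon$, i.e.\ $|\Im\KD{j,k}(\rho)|\le \tfrac14 \tan\epsilon$. On $(0,\pi/4]$ the ratio $\tan\epsilon/\epsilon$ is increasing and bounded above by $4/\pi$, so $\tfrac14\tan\epsilon<\epsilon$. Under the hypotheses $\epsilon<\pi/4$ and $\epsilon<|\Im\KD{j,k}(\rho)|$ one therefore gets $|\Im\KD{j,k}(\rho)|\le \tfrac14\tan\epsilon<\epsilon<|\Im\KD{j,k}(\rho)|$, the desired contradiction.

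The only delicate step is the bound $|S|\le \ProbabilityOfDisturbance$: one has to simultaneously exploit the zero-$\lambda'$-marginal of $h$ coming from \eqref{eq:NonContextualityConstraint1} to annihilate the $O(1)$ piece of the response function $\OnticMeasureBProb{k}(+1|\lambda')$, and the pointwise bound from \eqref{eq:NonContextualityConstraint4} to absorb what is left into the $\sin^2\epsilon$-suppressed transition term. The final elementary trig estimate $\tan\epsilon<(4/\pi)\epsilon$ on $(0,\pi/4]$ is routine.
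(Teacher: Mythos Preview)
Your argument is correct and uses the same two noncontextuality constraints (Eqs.~\eqref{eq:NonContextualityConstraint1} and \eqref{eq:NonContextualityConstraint4}) as the paper, but the way you extract the bound is genuinely different. The paper works with a single probability $\OutcomePMF{3}{j,k}(y,+1)$ for the appropriate sign of $y$ and controls the HVM integral by splitting $\Lambda$ into the level set $I_\lambda=\{\lambda':\OnticMeasureBProb{k}(+1|\lambda')\le\OnticMeasureBProb{k}(+1|\lambda)\}$ and its complement; the two cases $\Im\KD{j,k}(\rho)\gtrless 0$ are then treated separately. You instead antisymmetrise in $y$, forming $S=\OutcomePMF{3}{j,k}(+1,+1)-\OutcomePMF{3}{j,k}(-1,+1)$, and use the vanishing $\lambda'$-marginal of $h$ (from \eqref{eq:NonContextualityConstraint1}) to subtract off $\OnticMeasureBProb{k}(+1|\lambda)$ before invoking the pointwise domination $|h|\le\sum_y\OnticYWeakMeasureAProb{j}(y,\cdot|\cdot)$ from \eqref{eq:NonContextualityConstraint4}. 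This subtraction-then-majorise manoeuvre is cleaner: it handles both signs of $\Im\KD{j,k}(\rho)$ at once, avoids the level-set bookkeeping, and in fact yields the sharper inequality $|\Im\KD{j,k}(\rho)|\le\tfrac14\tan\epsilon$ (versus the paper's $|\Im\KD{j,k}(\rho)|\le\tfrac{2+p^k_{+1}}{4}\tan\epsilon$). The paper's route, on the other hand, is a direct adaptation of the weak-value contextuality arguments in the cited literature and makes the connection to those results more transparent.
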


\begin{proof}
    We first consider the case that $\Im \KD{j,k}(\rho) > 0$. Suppose there is a noncontextual hidden-variable model for the six protocols in Figure 2. We will show that for sufficiently small $\epsilon$ such a model cannot be consistent with the predictions of quantum theory. We start by upper bounding the prediction of the hidden-variable model for $f_3(+1,+1)$. For $\lambda \in \Lambda$, let
    \begin{equation}
        \Ilamb = \{\lambda' \in \Lambda | \OnticMeasureBProb{k}(+1|\lambda') \leq \OnticMeasureBProb{k}(+1|\lambda)\},
    \end{equation}
    \noindent be the set of ontic states with a higher ``outcome $+1$" probability than $\lambda$. Using the hidden-variable model, we find that
    \begin{align}
        f_3(+1,+1) &= \int_\Lambda d\lambda \int_\Lambda d\lambda' \oprep[\rho](\lambda) \OnticYWeakMeasureAProb{j}(+1,\lambda'|\lambda)\OnticMeasureBProb{k}(+1|\lambda')\\
        &= \int_\Lambda d\lambda  \oprep[\rho](\lambda) \left[ \int_{\Ilamb} d\lambda'\OnticYWeakMeasureAProb{j}(+1,\lambda'|\lambda)\OnticMeasureBProb{k}(+1|\lambda') +  \int_{\Lambda \setminus \Ilamb}  d\lambda'\OnticYWeakMeasureAProb{j}(+1,\lambda'|\lambda)\OnticMeasureBProb{k}(+1|\lambda')  \right]\\
        &\leq \int_\Lambda d\lambda  \oprep[\rho](\lambda) \left[ \int_{\Ilamb} d\lambda'\OnticYWeakMeasureAProb{j}(+1,\lambda'|\lambda)\OnticMeasureBProb{k}(+1|\lambda) +  \int_{\Lambda \setminus \Ilamb}  d\lambda'\OnticYWeakMeasureAProb{j}(+1,\lambda'|\lambda)  \right]\\
        &\leq \int_\Lambda d\lambda  \oprep[\rho](\lambda) \left[ \int_{\Lambda} d\lambda'\OnticYWeakMeasureAProb{j}(+1,\lambda'|\lambda)\OnticMeasureBProb{k}(+1|\lambda) +  \int_{\Lambda \setminus \Ilamb}  d\lambda' \sum_y\OnticYWeakMeasureAProb{j}(y,\lambda'|\lambda)  \right]. \label{eq:LastLineOfUpperBoundingOnticf3}
    \end{align}
    \noindent In the second line we broke up the integral; in the third line we used the definition of $\Ilamb$ for the first term and that $\OnticMeasureBProb{k}(+1|\lambda') \leq 1$ for the second term; and on the fourth line we increased the set integrated over for the first term and we summed over $y$ in the second term.

    We proceed by substituting Noncontextuality Constraints (12) and (15) from the article into Eq. \eqref{eq:LastLineOfUpperBoundingOnticf3}. As $\lambda \notin \Lambda\setminus\Ilamb$,  the delta function in Eq. (15) of the article vanishes inside the integral. We obtain
    \begin{align}
        f_3(+1,+1) &\leq \int_\Lambda d\lambda  \oprep[\rho](\lambda) \left[ \frac{1}{2}\OnticMeasureBProb{k}(+1|\lambda) +  \ProbabilityOfDisturbance\int_{\Lambda \setminus \Ilamb}  d\lambda' \OnticDTransformation{j}(\lambda'|\lambda)  \right]\\
        &\leq \int_\Lambda d\lambda  \oprep[\rho](\lambda) \left[ \frac{1}{2}\OnticMeasureBProb{k}(+1|\lambda) +  \ProbabilityOfDisturbance  \right]\\
        &= \frac{1}{2}p_{+1}^k + \ProbabilityOfDisturbance, \label{eq:UpperBoundForf3}
    \end{align}
    \noindent where in the second line we used that integrating over any set of outcomes of a transformation matrix yields a probability less than $1$. 

    We now turn to the quantum theoretical prediction for $f_3(+1,+1)$. Equation \eqref{eq:f3sup} dictates that it is
    \begin{align}
        f_3(+1,+1) &= \frac{1}{2}(1-\ProbabilityOfDisturbance)p_{+1}^{k} + \ProbabilityOfMeasurement p_{+1}^k \Im w_{j,k} + \frac{1}{2}\ProbOfBOnDExotD{j,k}{z}\\
        & \geq \frac{1}{2}(1-\ProbabilityOfDisturbance)p_{+1}^{k} + \ProbabilityOfMeasurement \Im Q_{j,k}(\rho). \label{eq:LowerBoundForf3}
    \end{align}

    Comparing Eqs. \eqref{eq:UpperBoundForf3} and $\eqref{eq:LowerBoundForf3}$, we find that 
    \begin{equation}
        \frac{1}{2}(1-\ProbabilityOfDisturbance)p_{+1}^{k} + \ProbabilityOfMeasurement \Im Q_{j,k}(\rho) \leq \frac{1}{2}p_{+1}^k + \ProbabilityOfDisturbance.
    \end{equation}

    Using that $\ProbabilityOfDisturbance = \sin^2\epsilon$ and that $\ProbabilityOfMeasurement = \sin 2\epsilon$, and rearranging we get that
    \begin{equation}
        \tan \epsilon \geq \frac{4}{2+p_{+1}^{k}}\Im \KD{j,k}(\rho) \geq \frac{4}{\pi} \Im Q_{j,k}(\rho),
    \end{equation}
    \noindent where in the last line we used that $2+p_{+1}^k \leq \pi $ and that $\Im \KD{j,k}(\rho) > 0$. For $\epsilon \in [0,\pi/4]$, $\tan \epsilon \leq \frac{4\epsilon}{\pi}$. Thus, for $\epsilon \in [0,\pi/4]$, and under the assumption of a noncontextual hidden-variable model, consistent with quantum theory, we find the inequality 
    \begin{equation}
        \epsilon \geq \Im \KD{j,k}(\rho).
    \end{equation}

    This inequality is in contradiction with the hypothesis that $\epsilon < \min\{\Im \KD{j,k}(\rho), \pi/4\}$. We conclude that a noncontextual hidden-variable model that is consistent with quantum theory and describes Protocols 1 to 6 cannot exist for such $\epsilon$.

    The proof for the case $\Im \KD{j,k}(\rho) < 0$ follows from considering $f_3(-1,+1)$ and following exactly the same steps.
    
\end{proof}

A similar result holds for the case where the KD distribution is negative:

\begin{Lemma}\label{lem:eps_bound_real_wv}
Suppose that for some state $\rho$, and some $j,k$, $\Re \KD{j,k}(\rho) < 0$, then for $\epsilon < \min\{-\Re \KD{j,k}(\rho), \pi/4\}$, Protocols 1 to 6 do not admit a noncontextual hidden-variable model.    
\end{Lemma}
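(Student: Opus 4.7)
The proof will mirror the structure of Lemma \ref{lem:eps_bound_imag_wv}, but with three substitutions: use Protocol 2 instead of Protocol 3, use the $X$-type noncontextuality constraints (12) and (14) instead of the $Y$-type constraints (13) and (15), and exploit the sign freedom in $x\in\{\pm 1\}$ to pick up $-\Re Q_{j,k}(\rho)>0$. Concretely, I will consider the outcome probability $f_2(-1,+1)$, derive an upper bound on it from any noncontextual hidden-variable model, derive a lower bound on it from quantum theory, and then show that for $\epsilon<\min\{-\Re Q_{j,k}(\rho),\pi/4\}$ the two bounds are incompatible.

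\textbf{HVM upper bound.} Assume a noncontextual hidden-variable model for Protocols 1--6 exists. Following the proof of Lemma \ref{lem:eps_bound_imag_wv}, I partition $\Lambda$ using $I_\lambda = \{\lambda'\in\Lambda\,|\,\OnticMeasureBProb{k}(+1|\lambda')\leq \OnticMeasureBProb{k}(+1|\lambda)\}$ and write
\begin{equation}
f_2(-1,+1) = \int_\Lambda d\lambda\,\mu_\rho(\lambda)\!\left[\int_{I_\lambda}\!\!d\lambda'\,\OnticXWeakMeasureAProb{j}(-1,\lambda'|\lambda)\OnticMeasureBProb{k}(+1|\lambda')+\!\int_{\Lambda\setminus I_\lambda}\!\!\!d\lambda'\,\OnticXWeakMeasureAProb{j}(-1,\lambda'|\lambda)\OnticMeasureBProb{k}(+1|\lambda')\right].
\end{equation}
On $I_\lambda$, bound $\OnticMeasureBProb{k}(+1|\lambda')\leq\OnticMeasureBProb{k}(+1|\lambda)$ and integrate $\OnticXWeakMeasureAProb{j}(-1,\cdot|\lambda)$ over all of $\Lambda$; on $\Lambda\setminus I_\lambda$, bound $\OnticMeasureBProb{k}(+1|\lambda')\leq 1$ and sum over $x$. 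Applying constraint (12), $\OnticXWeakMeasureAProb{j}(-1|\lambda)\leq (1+\ProbabilityOfMeasurement)/2$; applying constraint (14) together with $\lambda\notin\Lambda\setminus I_\lambda$ (so the delta vanishes under the integral), the second piece is at most $\ProbabilityOfDisturbance\int_{\Lambda\setminus I_\lambda}d\lambda'\,\OnticDTransformation{j}(\lambda'|\lambda)\leq\ProbabilityOfDisturbance$. This yields
\begin{equation}
f_2(-1,+1)\;\leq\;\tfrac{1+\ProbabilityOfMeasurement}{2}p^k_{+1}+\ProbabilityOfDisturbance.
\end{equation}

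\textbf{QT lower bound and contradiction.} From Eq.~\eqref{eq:f2sup} with $x=-1$, $z=+1$, together with $w^{+1}_{j,k}=Q_{j,k}(\rho)/p^k_{+1}$, I drop the nonnegative term $\tfrac{1}{2}\ProbabilityOfDisturbance\,p^{D_j}_{+1}$ to get
\begin{equation}
f_2(-1,+1)\;\geq\;\tfrac{1}{2}(1-\ProbabilityOfDisturbance)p^k_{+1}+\tfrac{1}{2}\ProbabilityOfMeasurement\,p^k_{+1}-\ProbabilityOfMeasurement\,\Re Q_{j,k}(\rho).
\end{equation}
Chaining with the HVM upper bound, the $\tfrac{1}{2}\ProbabilityOfMeasurement\,p^k_{+1}$ terms cancel and only $\tfrac{1}{2}\ProbabilityOfDisturbance\,p^k_{+1}$ survives on the right, giving
\begin{equation}
-\ProbabilityOfMeasurement\,\Re Q_{j,k}(\rho)\;\leq\;\ProbabilityOfDisturbance\!\left(1+\tfrac{1}{2}p^k_{+1}\right).
\end{equation}
Substituting $\ProbabilityOfMeasurement=\sin 2\epsilon$ and $\ProbabilityOfDisturbance=\sin^2\epsilon$, rearranging, and using $2+p^k_{+1}\leq\pi$ together with $\tan\epsilon\leq 4\epsilon/\pi$ for $\epsilon\in[0,\pi/4]$ (exactly as in Lemma \ref{lem:eps_bound_imag_wv}) I obtain $\epsilon\geq-\Re Q_{j,k}(\rho)$, contradicting the hypothesis.

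\textbf{Main obstacle.} The only place where the argument differs nontrivially from Lemma \ref{lem:eps_bound_imag_wv} is that constraint (12) only bounds $\OnticXWeakMeasureAProb{j}(-1|\lambda)$ by $(1+\ProbabilityOfMeasurement)/2$ rather than by $1/2$ (as constraint (13) does in the imaginary case). I expect this to be the one subtle step: one has to check that the extra $\ProbabilityOfMeasurement/2$ in the HVM upper bound is exactly compensated by the $\tfrac{1}{2}\ProbabilityOfMeasurement\,p^k_{+1}$ contribution on the QT side, so that the final inequality reduces to the same clean form $\ProbabilityOfMeasurement\,(-\Re Q_{j,k}(\rho))\leq\ProbabilityOfDisturbance(1+p^k_{+1}/2)$ that was obtained in the imaginary case. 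The case $\Re Q_{j,k}(\rho)<0$ with the sign flip handled via $x=-1$ is the real-part analog of considering $f_3(-1,+1)$ for $\Im Q_{j,k}(\rho)<0$ in the previous lemma.
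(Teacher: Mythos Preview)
Your proof is correct and follows essentially the same route as the paper's: upper-bound $f_2(-1,+1)$ via the $I_\lambda$ partition and the $X$-type noncontextuality constraints, lower-bound it from the quantum expression \eqref{eq:f2sup}, and combine to get $\tan\epsilon\geq -\tfrac{4}{2+p^k_{+1}}\Re Q_{j,k}(\rho)$, whence $\epsilon\geq -\Re Q_{j,k}(\rho)$. The only slip is a label swap: the constraint giving $\OnticXWeakMeasureAProb{j}(-1|\lambda)\leq(1+\ProbabilityOfMeasurement)/2$ is Eq.~(13) in the article (the $X$-type marginal), not Eq.~(12), and the $Y$-type constraint used in Lemma~\ref{lem:eps_bound_imag_wv} is Eq.~(12), not Eq.~(13); your bounds themselves are right, just relabel.
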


\begin{proof}
    The proof of this lemma is similar to the one of Lemma \ref{lem:eps_bound_imag_wv}. We assume a noncontextual hidden-variable model exists and find a contradiction. We start by upper bounding $f_2(-1,+1)$ using the hidden-variable model. We define $\Ilamb$ as in the proof of Lemma \ref{lem:eps_bound_imag_wv}, and through the exact same steps find that
    \begin{equation}\label{eq:UpperBoundOnf2}
        f_2(-1,+1) \leq \int_\Lambda d\lambda  \oprep[\rho](\lambda) \left[ \int_{\Lambda} d\lambda'\OnticXWeakMeasureAProb{j}(-1,\lambda'|\lambda)\OnticMeasureBProb{k}(+1|\lambda) +  \int_{\Lambda \setminus \Ilamb}  d\lambda' \sum_x\OnticXWeakMeasureAProb{j}(x,\lambda'|\lambda)  \right].
    \end{equation}

    Next, we plug Noncontextuality Constraints (13) and (14) from the article into Eq. \eqref{eq:UpperBoundOnf2}. Using the same steps as we did for Lemma \ref{lem:eps_bound_imag_wv}, we find that
    \begin{equation}\label{eq:UpperboundforF2two}
        f_2(-1,+1) \leq \frac{1}{2}(1+\ProbabilityOfMeasurement)p_{+1}^k +\ProbabilityOfDisturbance
    \end{equation}

    Next, we consider the prediction by quantum theory, given by Eq. \eqref{eq:f2sup}:
    \begin{align}
        f_2(-1,+1) &= \frac{1}{2}(1-\ProbabilityOfDisturbance) p_{+1}^k - \frac{1}{2}\ProbabilityOfMeasurement p_{+1}^k (2\Re w_{j,k} -1) + \frac{1}{2}\ProbabilityOfDisturbance\ProbOfBOnDExotD{j,k}{+1}\\
        &\geq \frac{1}{2}(1-\ProbabilityOfDisturbance) p_{+1}^k - \ProbabilityOfMeasurement\Re \KD{j,k}(\rho)  + \frac{1}{2}\ProbabilityOfMeasurement p_{+1}^k \label{eq:LowerBoundForF2}
    \end{align}

    We now combine Eqs. \eqref{eq:UpperboundforF2two} and \eqref{eq:LowerBoundForF2} and use that $\ProbabilityOfDisturbance = \sin^2 \epsilon$ and that $\ProbabilityOfMeasurement = \sin 2\epsilon$ to get that
    \begin{equation}
        \tan \epsilon \geq -\frac{4}{2+p_{+1}^k} \Re \KD{j,k}(\rho) \geq -\frac{4}{\pi} \KD{j,k}(\rho).
    \end{equation}

    Again, for $\epsilon \in [0,\pi/4]$, $\tan \epsilon \leq \frac{\pi}{4}\epsilon$, so we get that the existence of a noncontextual hidden-variable model implies that 
    \begin{equation}
        \epsilon \geq -\Re \KD{j,k}(\rho),
    \end{equation}
    from which the Lemma follows.

\end{proof}

Combining these Lemmata, we find:
\begin{Theorem}
    Let $\delta \in (0,\pi/4]$. If $\mcl{N}(\rho) > \delta$, then for all $\epsilon < \frac{\delta}{3d^2}$, the Protocols 1 to 6 do not admit a noncontextual hidden-variable model.
\end{Theorem}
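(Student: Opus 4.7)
The plan is to combine the three preceding lemmas in a direct case analysis; almost all of the real work has already been done, so this theorem is essentially a bookkeeping synthesis. First, I would apply the first lemma of Note III: the hypothesis $\mcl{N}(\rho) > \delta$ forces the existence of a pair of indices $(j,k)$ for which either $\Re \KD{j,k}(\rho) < -\delta/(3d^2)$ or $|\Im \KD{j,k}(\rho)| > \delta/(3d^2)$. I would then split the argument into these two cases.

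In the imaginary case, I would invoke Lemma \ref{lem:eps_bound_imag_wv}, which rules out any noncontextual hidden-variable model provided $\epsilon < \min\{|\Im \KD{j,k}(\rho)|, \pi/4\}$. The first bound is satisfied because the hypothesis $\epsilon < \delta/(3d^2)$ combined with $|\Im \KD{j,k}(\rho)| > \delta/(3d^2)$ yields $\epsilon < |\Im \KD{j,k}(\rho)|$. The second bound $\epsilon < \pi/4$ follows because $\delta \leq \pi/4$ and $d \geq 1$ imply $\delta/(3d^2) \leq \pi/12 < \pi/4$. Lemma \ref{lem:eps_bound_imag_wv} then immediately rules out a noncontextual model.

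In the real case, the argument is identical with Lemma \ref{lem:eps_bound_real_wv} substituted in place of Lemma \ref{lem:eps_bound_imag_wv}, using $-\Re \KD{j,k}(\rho) > \delta/(3d^2) > \epsilon$ in place of the corresponding imaginary bound, and the same computation $\epsilon < \pi/4$ as above.

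The only step that requires any thought at all is confirming that the single bound $\epsilon < \delta/(3d^2)$ supplied by the theorem's hypothesis simultaneously dominates both caps ($|\Im \KD{j,k}(\rho)|$ or $-\Re \KD{j,k}(\rho)$, as well as $\pi/4$) appearing in the two lemmas. Since both checks are trivial under the stated restriction $\delta \in (0,\pi/4]$, there is no genuine obstacle; the theorem's content is precisely the packaging of the dimension-independent lemmas of Note III into a single threshold statement involving the global quantity $\mcl{N}(\rho)$.
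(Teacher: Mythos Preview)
Your proposal is correct and matches the paper's approach exactly: the paper simply writes ``Combining these Lemmata, we find'' before stating the theorem, and your case analysis is precisely the combination being alluded to. The only quibble is your phrase ``dimension-independent lemmas''---the first lemma of Note~III carries the $d^2$ factor explicitly---but this does not affect the argument.
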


\section{KD-Positivity Implies Noncontextuality}\label{app:Non-Anomalous Weak Values Imply Non-Contextuality}

In this Note, we prove the second item in Theorem 1 of the article: for sufficiently small $\epsilon$, KD-positivity implies that there exists a noncontextual hidden variable model. Let us formally restate what we are proving here:

\begin{Lemma}
    If $\rho$ is KD-positive and $\epsilon < \frac{\sqrt{5}}{5}$, then there exists a noncontextual hidden variable model for Protocols 1 to 6.
\end{Lemma}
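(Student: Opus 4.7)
The plan is to verify that the explicit ansatz already given in the main text, Eqs.~\eqref{eq:OnticMeasureBProbInHVM}--\eqref{eq:AnsatzForWY}, defines a bona fide noncontextual hidden-variable model for the six protocols, provided the free stochastic matrix $\UndetermindedStochasticMatrix{j}$ is chosen appropriately and $\epsilon<\sqrt{5}/5$. KD-positivity enters through the fact that the weak values $\KD{j,\lambda}(\rho)/\Tr(\projB{\lambda}\rho)$ are then real and lie in $[0,1]$ (on the null-$\mu_\rho$ set we fix them arbitrarily); the bound on $\epsilon$ enters \emph{only} through the non-negativity check on $\OnticXWeakMeasureAProb{j}(x,\lambda'|\lambda)$.

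First I would use the Noncontextuality Constraints to read off the two distributions the ansatz leaves open. Constraint~\eqref{eq:NonContextualityConstraint1} is automatic because \eqref{eq:AnsatzForWY} carries no $y$-dependence. Marginalising \eqref{eq:AnstazForWX} over $\lambda'$ and comparing with the $X$-type constraint gives $\OnticMeasureAProb{j}(+1|\lambda) = \Re[\KD{j,\lambda}(\rho)/\Tr(\projB{\lambda}\rho)]$, a valid probability precisely because $\rho$ is KD-positive. Marginalising \eqref{eq:AnstazForWX} over $x$ (and \eqref{eq:AnsatzForWY} over $y$) and comparing with \eqref{eq:NonContextualityConstraint4} forces $\OnticDTransformation{j}(\lambda'|\lambda)=\UndetermindedStochasticMatrix{j}(\lambda'|\lambda)$, which is stochastic by assumption.

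Second, I would verify that the HVM reproduces Predictions~\eqref{eq:f1}--\eqref{eq:f3}. $\OutcomePMF{1}{k}$ is immediate. For $\OutcomePMF{2}{j,k}$ and $\OutcomePMF{3}{j,k}$, substituting the ansatz collapses the double integral over $\lambda,\lambda'$ to a sum over the $B$-basis; matching the exact formulas \eqref{eq:f2sup}--\eqref{eq:f3sup} requires $\UndetermindedStochasticMatrix{j}$ to transport the $B$-distribution of $\rho$ to that of $\DChannel{j}(\rho)$, i.e.\ $\sum_\lambda \UndetermindedStochasticMatrix{j}(\lambda'|\lambda)\Tr(\rho\projB{\lambda}) = \Tr(\projB{\lambda'}D_j\rho D_j^\dagger)$. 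A convenient stochastic solution is the $\lambda$-independent choice $\UndetermindedStochasticMatrix{j}(\lambda'|\lambda) := \Tr(\projB{\lambda'}D_j\rho D_j^\dagger)$. The $y$-independent form of \eqref{eq:AnsatzForWY} then correctly reproduces $\OutcomePMF{3}{j,k}$ because its $y$-term in \eqref{eq:f3sup} vanishes when $\Im\KD{j,k}(\rho)=0$, which holds for all KD-positive $\rho$.

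The main obstacle is non-negativity of $\OnticXWeakMeasureAProb{j}(x,\lambda'|\lambda)$. Off-diagonal entries equal $\tfrac{1}{2}\ProbabilityOfDisturbance\,\UndetermindedStochasticMatrix{j}(\lambda'|\lambda)\ge 0$. The diagonal entries are worst-case lower bounded, taking $\UndetermindedStochasticMatrix{j}(\lambda|\lambda)=0$ and $x(2\Re w_{j,\lambda}-1)=-1$, by $\tfrac{1}{2}(1-\ProbabilityOfDisturbance)-\tfrac{1}{2}\ProbabilityOfMeasurement$, so non-negativity reduces to $1-\sin^2\epsilon-\sin 2\epsilon\ge 0$, i.e.\ $\tan\epsilon\le 1/2$. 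The critical value $\tan\epsilon=1/2$ occurs at $\epsilon=\arcsin(1/\sqrt{5})$; since $\arcsin x>x$ on $(0,1)$, the hypothesis $\epsilon<\sqrt{5}/5=1/\sqrt{5}$ implies $\epsilon<\arcsin(1/\sqrt{5})$ and hence $\tan\epsilon<1/2$. The remaining distributions $\mu_\rho$, $\OnticMeasureBProb{k}$, and $\OnticYWeakMeasureAProb{j}$ are manifestly non-negative and normalised, completing the proof.
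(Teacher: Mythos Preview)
Your proposal is correct and follows essentially the same route as the paper: the same ontic space $\Lambda=\mathbb{Z}_d$, the same ansatz \eqref{eq:OnticMeasureBProbInHVM}--\eqref{eq:AnsatzForWY}, the same choice $\UndetermindedStochasticMatrix{j}(\lambda'|\lambda)=\Tr(\projB{\lambda'}D_j\rho D_j^\dagger)$, and the same identifications $\OnticMeasureAProb{j}(+1|\lambda)=\Re w_{j,\lambda}$, $\OnticDTransformation{j}=\UndetermindedStochasticMatrix{j}$. The only difference is in the non-negativity step: the paper bounds $h(\epsilon)=\tfrac{1}{2}\cos^2\epsilon-\tfrac{1}{2}\sin 2\epsilon$ via the Mean Value Theorem ($h'(c)\ge -\sqrt{5}/2$), whereas you factor directly to $\tan\epsilon\le 1/2$ and then use $\arcsin(1/\sqrt{5})>1/\sqrt{5}$, which is a slightly cleaner way to arrive at the same threshold.
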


We will prove this by explicitly constructing a noncontextual hidden variable model that correctly reproduces the outcomes of Protocols 1 to 6 as predicted by quantum mechanics.  We will use the notation set out in Table~\ref{fig:Procedures}. Any such hidden-variable model must be consistent with the predictions of quantum theory (Eqs. (6) to (8) of the article). We thus get the following `Correctness Constraints':
\begin{align}
    \sum_{\lambda \in \Lambda} \oprep[\rho](\lambda) \OnticMeasureBProb{k}(z|\lambda) &= \OutcomePMF{1}{k}(z),\label{eq:CorrectnessConstraint1}\\
    \sum_{\lambda,\lambda' \in \Lambda} \oprep[\rho](\lambda) \OnticXWeakMeasureAProb{j}(x, \lambda'|\lambda) \OnticMeasureBProb{k}(z|\lambda') &= \OutcomePMF{2}{j,k}(x,z),\label{eq:CorrectnessConstraint2}\\
    \sum_{\lambda,\lambda' \in \Lambda} \oprep[\rho](\lambda)    \OnticYWeakMeasureAProb{j}(y, \lambda'|\lambda) \OnticMeasureBProb{k}(z|\lambda') &= \OutcomePMF{3}{j,k}(y,z),\label{eq:CorrectnessConstraint3}
\end{align}
\noindent where we have chosen the ontic space $\Lambda$ to be finite, turning the integrals into sums.

To ensure the model is noncontextual, we must ensure the hidden variable model satisfies the `Noncontextuality Constraints' described in the article (Eqs. (12) - (15) in the article). They are 
\begin{align}
    \sum_{\lambda'\in\Lambda} \OnticXWeakMeasureAProb{j}(x,\lambda'|\lambda) &= (1 - \ProbabilityOfMeasurement)\frac{1}{2} + \ProbabilityOfMeasurement \OnticMeasureAProb{j}(x|\lambda),\label{eq:NonContextualityConstraint1Sup}\\
    \sum_{x\in\{\pm1\}}  \OnticXWeakMeasureAProb{j}(x,\lambda'|\lambda) &= (1-\ProbabilityOfDisturbance) \delta_{\lambda\lambda'} + \ProbabilityOfDisturbance \OnticDTransformation{j}(\lambda'|\lambda),\label{eq:NonContextualityConstraint2}\\
    \sum_{\lambda'\in\Lambda} \OnticYWeakMeasureAProb{j}(y,\lambda'|\lambda) &= \frac{1}{2},\label{eq:NonContextualityConstraint3}\\
    \sum_{y\in\{\pm1\}}  \OnticYWeakMeasureAProb{j}(y,\lambda'|\lambda) &= (1-\ProbabilityOfDisturbance) \delta_{\lambda\lambda'} + \ProbabilityOfDisturbance \OnticDTransformation{j}(\lambda'|\lambda).\label{eq:NonContextualityConstraint4Sup}    
\end{align}

To construct a noncontextual hidden variable model, we must choose $\oprep[\rho],\OnticMeasureBProb{k}, \OnticMeasureAProb{j},\OnticXWeakMeasureAProb{j},\OnticYWeakMeasureAProb{j}$ and $\OnticDTransformation{j}$ such that Eqs. (\ref{eq:CorrectnessConstraint1}) - (\ref{eq:NonContextualityConstraint4Sup}) are satisfied for all $j,k$. Furthermore, we must ensure that the $\oprep[\rho],\OnticMeasureBProb{k}, \OnticMeasureAProb{j},\OnticXWeakMeasureAProb{j},\OnticYWeakMeasureAProb{j}$ and $\OnticDTransformation{j}$ we choose are probability distributions in the sense that they are normalized to $1$ and take values in $[0,1]$. We will prove that a noncontextual hidden variable model exists by explicitly providing $\oprep[\rho],\OnticMeasureBProb{k}, \OnticMeasureAProb{j},\OnticXWeakMeasureAProb{j},\OnticYWeakMeasureAProb{j}$ and $\OnticDTransformation{j}$ such that these constraints are satisfied.

Let us start by choosing our ontic space. We will set $\Lambda = \mathbb{Z}_d$. Our intuition will be that the hidden variable model simply stores what the outcome of the $B_k$ measurement will be. We thus force $\OnticMeasureBProb{k}$ to be outcome deterministic:
\begin{equation}\label{eq:OnticMeasureBProbInHVMSupp}
    \OnticMeasureBProb{k}(z|\lambda) = 
    \begin{cases}
        \delta_{k\lambda} &\text{ if } z = +1\\
        1 - \delta_{k\lambda} &\text{ if } z = {-1},
    \end{cases}
\end{equation}
\noindent which is clearly a normalized probability distribution. Next, we set $\oprep[\rho]$ to be the probability distribution from measuring $B$. Thus, using the notation introduced in Note \ref{app:Quantum_Theoretical_Predictions}, we set
\begin{equation}\label{eq:OprepInHVM}
    \oprep[\rho](\lambda) = p_{+1}^\lambda.
\end{equation}

We can now check Eq. (\ref{eq:CorrectnessConstraint1}) for $z = +1$ by plugging in Eqs. (\ref{eq:OnticMeasureBProbInHVMSupp}) and (\ref{eq:OprepInHVM}):
\begin{align}
\begin{split}
    \sum_{\lambda \in \Lambda} \oprep[\rho](\lambda) \OnticMeasureBProb{k}(+1|\lambda) &= \sum_{\lambda \in \mathbb{Z}_d} p_{+1}^\lambda \delta_{k\lambda}\\
    &= p_{+1}^k\\
    &= \OutcomePMF{1}{k}(+1),
\end{split}
\end{align}
\noindent which is correct. Furthermore, we may also check (\ref{eq:CorrectnessConstraint1}) for $z = {-1}$:
\begin{align}
\begin{split}
    \sum_{\lambda \in \Lambda} \oprep[\rho](\lambda) \OnticMeasureBProb{k}({-1}|\lambda) &= \sum_{\lambda \in \mathbb{Z}_d} p_{+1}^\lambda (1-\delta_{k\lambda})\\
    &= 1-p_{+1}^k\\
    &= p_{-1}^k\\
    &= \OutcomePMF{1}{k}({-1}),
\end{split}
\end{align}
\noindent which is also as required. Next, to check Eq. (\ref{eq:CorrectnessConstraint2}), we must provide $\OnticXWeakMeasureAProb{j}$. We use the following ansatz:
\begin{equation}\label{eq:AnstazForWXSupp}
    \OnticXWeakMeasureAProb{j}(x,\lambda'|\lambda) = \delta_{\lambda\lambda'}[\frac{1}{2}(1-\ProbabilityOfDisturbance) + \frac{1}{2}x \ProbabilityOfMeasurement (2\Re w_{j,\lambda'}-1)] + \frac{1}{2}\ProbabilityOfDisturbance \UndetermindedStochasticMatrix{j}(\lambda'|\lambda),
\end{equation}
\noindent where $\UndetermindedStochasticMatrix{j}$ is a stochastic matrix that we choose such that it is normalized (i.e. $\sum_{\lambda'} \UndetermindedStochasticMatrix{j}(\lambda'|\lambda) = 1$) and that it satisfies
\begin{equation}
    \sum_\lambda \UndetermindedStochasticMatrix{j}(\lambda'|\lambda) p_{+1}^{\lambda} = \ProbOfBOnDExotD{j\lambda'}{+1}.
\end{equation}

Such a $\UndetermindedStochasticMatrix{j}$ exists, as we may for example set $\UndetermindedStochasticMatrix{j}(\lambda'|\lambda) = \ProbOfBOnDExotD{j\lambda'}{+1}$. We must now ensure that this ansatz yields a probability distribution. We first check normalization:
\begin{equation}
    \begin{split}
        \sum_{\lambda' \in \mathbb{Z}_d, x \in {\pm 1}} \OnticXWeakMeasureAProb{j}(x,\lambda'|\lambda) &= \sum_{\lambda' \in \mathbb{Z}_d, x \in {\pm 1}}\left( \delta_{\lambda\lambda'}\left[\frac{1}{2}(1-\ProbabilityOfDisturbance) + \frac{1}{2}x \ProbabilityOfMeasurement (2\Re w_{j,\lambda'}-1)\right] + \frac{1}{2} \ProbabilityOfDisturbance \UndetermindedStochasticMatrix{j}(\lambda'|\lambda)\right)\\
        &= \sum_{\lambda' \in \mathbb{Z}_d} [\delta_{\lambda\lambda'}(1-\ProbabilityOfDisturbance)  + \ProbabilityOfDisturbance \UndetermindedStochasticMatrix{j}(\lambda'|\lambda)]\\
        &= 1 - \ProbabilityOfDisturbance + \ProbabilityOfDisturbance = 1.
    \end{split}
\end{equation}

Next, we show that if $\epsilon < \frac{\sqrt{5}}{5}$, then $\OnticXWeakMeasureAProb{j}(x,\lambda'|\lambda) \in [0,1]$.

\begin{Lemma}
    If $\epsilon < \frac{\sqrt{5}}{5}$, then $\OnticXWeakMeasureAProb{j}(x,\lambda'|\lambda)$ as given in Eq. (\ref{eq:AnstazForWXSupp}) lies in $[0,1]$.
\end{Lemma}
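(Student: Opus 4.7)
I would split according to whether $\lambda = \lambda'$ or not, because the ansatz in Eq.~\eqref{eq:AnstazForWXSupp} has a very different shape in the two cases. When $\lambda \neq \lambda'$ the expression reduces to $\frac{1}{2}\ProbabilityOfDisturbance\,\UndetermindedStochasticMatrix{j}(\lambda'|\lambda)$, which lies in $[0,1]$ immediately since $\ProbabilityOfDisturbance=\sin^2\epsilon\in[0,1]$ and $\UndetermindedStochasticMatrix{j}$ is stochastic. All the work therefore concerns the diagonal $\lambda=\lambda'$.

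The first key input is KD-positivity of $\rho$. Whenever $p_{+1}^{\lambda}=\Tr(\projB{\lambda}\rho)>0$, the weak value $w_{j,\lambda}=\KD{j,\lambda}(\rho)/p_{+1}^{\lambda}$ is a ratio of two nonnegative reals with $\KD{j,\lambda}(\rho)\le \sum_{j'}\KD{j',\lambda}(\rho)=p_{+1}^{\lambda}$, so $w_{j,\lambda}\in[0,1]$ is real; hence $2\Re w_{j,\lambda}-1\in[-1,1]$. (For $\lambda$ with $p_{+1}^{\lambda}=0$ the value of $w_{j,\lambda}$ is irrelevant for the correctness constraint since $\oprep[\rho](\lambda)=0$, and I would simply fix $w_{j,\lambda}:=1/2$ by convention so the diagonal term becomes $\frac{1}{2}(1-\ProbabilityOfDisturbance)+\frac{1}{2}\ProbabilityOfDisturbance\UndetermindedStochasticMatrix{j}(\lambda|\lambda)\in[0,1]$.)

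Using $|2\Re w_{j,\lambda}-1|\le 1$ and $\UndetermindedStochasticMatrix{j}(\lambda|\lambda)\in[0,1]$, the diagonal value is bounded above by $\frac{1}{2}(1-\ProbabilityOfDisturbance)+\frac{1}{2}\ProbabilityOfMeasurement+\frac{1}{2}\ProbabilityOfDisturbance = \frac{1}{2}(1+\ProbabilityOfMeasurement)\le 1$, so the upper bound is automatic. The only real obstacle, and the step where the hypothesis $\epsilon<\sqrt{5}/5$ is needed, is the lower bound
\begin{equation}
\tfrac{1}{2}(1-\ProbabilityOfDisturbance)-\tfrac{1}{2}\ProbabilityOfMeasurement+\tfrac{1}{2}\ProbabilityOfDisturbance\,\UndetermindedStochasticMatrix{j}(\lambda|\lambda)\ \ge\ 0,
\end{equation}
whose worst case is $\UndetermindedStochasticMatrix{j}(\lambda|\lambda)=0$, where it reduces to $\ProbabilityOfMeasurement+\ProbabilityOfDisturbance\le 1$, i.e.\ $\sin 2\epsilon+\sin^2\epsilon\le 1$, equivalently $2\sin\epsilon\cos\epsilon\le \cos^2\epsilon$, i.e.\ $\tan\epsilon\le 1/2$.

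To close, I would verify that $\epsilon<\sqrt{5}/5$ implies $\tan\epsilon\le 1/2$: from $\sin\epsilon\le\epsilon<1/\sqrt{5}$ one gets $\sin^2\epsilon\le 1/5$, hence $\cos^2\epsilon\ge 4/5$, and therefore $\tan^2\epsilon=\sin^2\epsilon/\cos^2\epsilon\le(1/5)/(4/5)=1/4$. This yields the required inequality and completes the bound. The main conceptual ingredient is thus the KD-positivity giving $\Re w_{j,\lambda}\in[0,1]$; the main technical ingredient is the elementary trigonometric inequality $\sin 2\epsilon+\sin^2\epsilon\le 1$, which is precisely where the numerical threshold $\sqrt{5}/5$ enters.
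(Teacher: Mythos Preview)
Your proof is correct and follows essentially the same route as the paper: both reduce the nontrivial case to the diagonal $\lambda=\lambda'$, use KD-positivity to get $\Re w_{j,\lambda}\in[0,1]$, and then verify the scalar inequality $\sin 2\epsilon \le \cos^2\epsilon$ for $\epsilon<1/\sqrt{5}$. The only difference is in that last verification: the paper bounds $h(\epsilon)=\tfrac12\cos^2\epsilon-\tfrac12\sin 2\epsilon$ via the Mean Value Theorem and the amplitude bound $|h'|\le \sqrt{5}/2$, whereas you rewrite the condition as $\tan\epsilon\le 1/2$ and deduce it from $\sin\epsilon\le\epsilon$ --- a slightly more direct argument that also makes transparent why the threshold $\sqrt{5}/5$ appears.
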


\begin{proof}
    $\OnticXWeakMeasureAProb{j}(x,\lambda'|\lambda)$ is clearly real, so it suffices to show that $\OnticXWeakMeasureAProb{j}(x,\lambda'|\lambda)\geq 0$ since normalization then implies that $\OnticXWeakMeasureAProb{j}(x,\lambda'|\lambda) \leq 1$. To show this, note that
    \begin{equation}\label{eq:boundingOnticMeasureAForXPointer}
        \begin{split}
            \OnticXWeakMeasureAProb{j}(x,\lambda'|\lambda) &= \delta_{\lambda\lambda'}[\frac{1}{2}(1-\ProbabilityOfDisturbance) + \frac{1}{2}x \ProbabilityOfMeasurement (2\Re w_{j,\lambda'}-1)] + \frac{1}{2}\ProbabilityOfDisturbance \UndetermindedStochasticMatrix{j}(\lambda'|\lambda)\\
            &\geq \delta_{\lambda\lambda'}[\frac{1}{2}(1-\ProbabilityOfDisturbance) + \frac{1}{2}x \ProbabilityOfMeasurement (2\Re w_{j,\lambda'}-1)]\\
            &\geq \delta_{\lambda\lambda'}[\frac{1}{2}\cos^2\epsilon - \frac{1}{2} \sin 2\epsilon ]\\
            &= \delta_{\lambda\lambda'}h(\epsilon),
        \end{split}
    \end{equation}
    \noindent where we defined $h(\epsilon) \coloneqq\frac{1}{2}\cos^2\epsilon - \frac{1}{2} \sin 2\epsilon$. By the Mean Value Theorem, there exists a $c \in [0,\epsilon]$ such that 
    \begin{equation}\label{eq:MVTApplied}
        h'(c)\epsilon + h(0) = h(\epsilon).
    \end{equation}

    There exists some phase $\psi$ such that
    \begin{equation}\label{eq:derivativeofh}
        h'(c) = \frac{1}{2}\sin 2c  - \cos 2c = \frac{\sqrt{5}}{2} \sin(2c + \psi) \geq -\frac{\sqrt{5}}{2}.
    \end{equation}

    Substituting \eqref{eq:MVTApplied} into (\ref{eq:boundingOnticMeasureAForXPointer}) and using Eq. \eqref{eq:derivativeofh} and that $h(0) = \frac{1}{2}$ yields
        \begin{equation}
        \begin{split}
            \OnticXWeakMeasureAProb{j}(x,\lambda'|\lambda) &= \delta_{\lambda\lambda'}[ h'(c)\epsilon + h(0)]\\
            &\geq \delta_{\lambda\lambda'}[\frac{-\sqrt{5}}{2} \epsilon + \frac{1}{2}]\\
            &\geq 0
        \end{split}
    \end{equation}
    \noindent where in the last line we used that $\epsilon < \frac{\sqrt{5}}{5}$. This completes the proof.
\end{proof}

We have thus found that the ansatz (\ref{eq:AnstazForWXSupp}) is a normalized probability distribution. We proceed to check that it indeed satisfies Eq. (\ref{eq:CorrectnessConstraint2}) for $z = +1$.

\begin{equation}
    \begin{split}
        \sum_{\lambda,\lambda' \in \Lambda} \oprep[\rho](\lambda) \OnticXWeakMeasureAProb{j}(x, \lambda'|\lambda) \OnticMeasureBProb{k}(+1|\lambda') &= \sum_{\lambda,\lambda' \in \mathbb{Z}_d} p_{+1}^{\lambda} \OnticXWeakMeasureAProb{j}(x, \lambda'|\lambda) \delta_{k\lambda'}\\
        &= \sum_{\lambda\in \mathbb{Z}_d} p_{+1}^{\lambda} \OnticXWeakMeasureAProb{j}(x, k|\lambda)\\
        &= \sum_{\lambda\in \mathbb{Z}_d} p_{+1}^{\lambda} \left\{\delta_{\lambda k}\left[\frac{1}{2}(1-\ProbabilityOfDisturbance) + \frac{1}{2}x \ProbabilityOfMeasurement (2\Re w_{j,k}-1)\right] + \frac{1}{2}\ProbabilityOfDisturbance \UndetermindedStochasticMatrix{j}(k|\lambda)\right\}\\
        &= \frac{1}{2}(1-\ProbabilityOfDisturbance)p_{+1}^{k}  + \frac{1}{2}x \ProbabilityOfMeasurement p_{+1}^{k} (2\Re w_{j,k}-1) + \frac{1}{2}\ProbabilityOfDisturbance \sum_{\lambda\in \mathbb{Z}_d} \UndetermindedStochasticMatrix{j}(k|\lambda)p_{+1}^{\lambda}\\
        &= \frac{1}{2}(1-\ProbabilityOfDisturbance)p_{+1}^{k}  + \frac{1}{2}x \ProbabilityOfMeasurement p_{+1}^{k} (2\Re w_{j,k}-1) + \frac{1}{2}\ProbabilityOfDisturbance \ProbOfBOnDExotD{j,k}{+1}\\
        &= \OutcomePMF{2}{j,k}(x,+1),
    \end{split}
\end{equation}
\noindent which is correct. Similarly, we may check Eq. (\ref{eq:CorrectnessConstraint2}) for $z = {-1}$:
\begin{equation}
    \begin{split}
        \sum_{\lambda,\lambda' \in \Lambda} \oprep[\rho](\lambda) \OnticXWeakMeasureAProb{j}(x, \lambda'|\lambda) \OnticMeasureBProb{k}({-1}|\lambda') &= \sum_{\lambda,\lambda' \in \mathbb{Z}_d} p_S^{\lambda} \OnticXWeakMeasureAProb{j}(x, \lambda'|\lambda) (1-\delta_{k\lambda'})\\
        &= \sum_{\lambda \in \mathbb{Z}_d, \lambda' \in \mathbb{Z}_d\backslash\{k\}} p_S^{\lambda} \OnticXWeakMeasureAProb{j}(x, \lambda'|\lambda) \\
        &= \sum_{\lambda' \in \mathbb{Z}_d\backslash\{k\}} \frac{1}{2}(1-\ProbabilityOfDisturbance)p_S^{\lambda'}  + \frac{1}{2}x \ProbabilityOfMeasurement p_S^{\lambda'} (2\Re w_{j\lambda'}-1) + \frac{1}{2}\ProbabilityOfDisturbance \ProbOfBOnDExotD{j\lambda'}{S}\\
        &= \frac{1}{2}(1-\ProbabilityOfDisturbance)p_{-1}^{k}  + \frac{1}{2}x \ProbabilityOfMeasurement p_{-1}^{k} \left(2\frac{\sum_{\lambda' \in \mathbb{Z}_d\backslash\{k\}} p_S^{\lambda'}\Re w_{j\lambda'}}{p_{-1}^{k}}-1\right) + \frac{1}{2}\ProbabilityOfDisturbance \ProbOfBOnDExotD{jk}{{-1}}\\
        &= \frac{1}{2}(1-\ProbabilityOfDisturbance)p_{-1}^{k}  + \frac{1}{2}x \ProbabilityOfMeasurement p_{-1}^{k} (2 g_{-1}(\Re w_{jk})-1) + \frac{1}{2}\ProbabilityOfDisturbance \ProbOfBOnDExotD{jk}{{-1}}\\
        &= \OutcomePMF{2}{j,k}(x,{-1}),
    \end{split}
\end{equation}
\noindent which is also correct. We thus conclude that our ansatz satisfies Eq. (\ref{eq:CorrectnessConstraint2}). 

We move on to checking the Noncontextuality Constraints involving $\OnticXWeakMeasureAProb{j}$. We start with Eq. (\ref{eq:NonContextualityConstraint1Sup}) for $x = +1$.
\begin{equation}
    \begin{split}
        \sum_{\lambda'\in\Lambda} \OnticXWeakMeasureAProb{j}(+1,\lambda'|\lambda) &= \sum_{\lambda'\in\mathbb{Z}_d} \left( \delta_{\lambda\lambda'}\left[\frac{1}{2}(1-\ProbabilityOfDisturbance) + \frac{1}{2} \ProbabilityOfMeasurement (2\Re w_{j,\lambda'}-1)\right] + \frac{1}{2}\ProbabilityOfDisturbance \UndetermindedStochasticMatrix{j}(\lambda'|\lambda)\right)\\
        &= \frac{1}{2}(1-\ProbabilityOfDisturbance) + \frac{1}{2} \ProbabilityOfMeasurement (2\Re w_{j,\lambda}-1) + \frac{1}{2}\ProbabilityOfDisturbance \\
        &= (1-\ProbabilityOfMeasurement)\frac{1}{2} + \ProbabilityOfMeasurement \Re w_{j,\lambda}
    \end{split}
\end{equation}

Comparing this with the right-hand side of Eq. (\ref{eq:NonContextualityConstraint1Sup}) for $x = +1$, we see that we must set $\OnticMeasureAProb{j}(+1|\lambda) = \Re w_{j,\lambda}$. This yields probabilities in $[0,1]$ since the fact that $\rho$ is KD-positive implies that the weak values lie in $[0,1]$ by Eq. \eqref{eq:TheWeakValues}. Furthermore, we fix the  $\OnticMeasureAProb{j}(+1|\lambda)$ via normalization: $\OnticMeasureAProb{j}(-1|\lambda) = 1 - \Re w_{j,\lambda}$. We can now also check Eq. (\ref{eq:NonContextualityConstraint1Sup}) for $x = -1$:
\begin{equation}
    \begin{split}
        \sum_{\lambda'\in\Lambda} \OnticXWeakMeasureAProb{j}(-1,\lambda'|\lambda) &= \sum_{\lambda'\in\mathbb{Z}_d} \left(\delta_{\lambda\lambda'}\left[\frac{1}{2}(1-\ProbabilityOfDisturbance) - \frac{1}{2} \ProbabilityOfMeasurement (2\Re w_{j,\lambda'}-1)\right] + \frac{1}{2}\ProbabilityOfDisturbance \UndetermindedStochasticMatrix{j}(\lambda'|\lambda)\right)\\
        &= \frac{1}{2}(1-\ProbabilityOfDisturbance) - \frac{1}{2} \ProbabilityOfMeasurement (2\Re w_{j,\lambda}-1) + \frac{1}{2}\ProbabilityOfDisturbance \\
        &= (1-\ProbabilityOfMeasurement) \frac{1}{2}- \ProbabilityOfMeasurement (1 - \Re w_{j,\lambda}) \\
        &= (1-\ProbabilityOfMeasurement) \frac{1}{2}- \ProbabilityOfMeasurement \OnticMeasureAProb{j}(-1|\lambda) 
    \end{split}
\end{equation}
\noindent as required. We thus conclude that Noncontextuality Constraint (\ref{eq:NonContextualityConstraint1Sup}) is satisfied. We move on to showing that Noncontextuality Constraint (\ref{eq:NonContextualityConstraint2}) is satisfied. We calculate its LHS for this ansatz:
\begin{equation}
    \begin{split}
        \sum_{x\in\{\pm1\}}  \OnticXWeakMeasureAProb{j}(x,\lambda'|\lambda) &= \sum_{x\in\{\pm1\}}\left(\delta_{\lambda\lambda'}\left[\frac{1}{2}(1-\ProbabilityOfDisturbance) + \frac{1}{2}x \ProbabilityOfMeasurement (2\Re w_{j,\lambda'}-1)\right] + \frac{1}{2}\ProbabilityOfDisturbance \UndetermindedStochasticMatrix{j}(\lambda'|\lambda)\right)\\
        &= (1-\ProbabilityOfDisturbance)\delta_{\lambda\lambda'} + \ProbabilityOfDisturbance \UndetermindedStochasticMatrix{j}(\lambda'|\lambda).
    \end{split}
\end{equation}

Comparing this result to the right-hand side of Eq. (\ref{eq:NonContextualityConstraint2}), we find that we must set $\OnticDTransformation{j}(\lambda'|\lambda)= \UndetermindedStochasticMatrix{j}(\lambda'|\lambda)$. By construction, $\UndetermindedStochasticMatrix{j}(\lambda'|\lambda)$ is a normalized stochastic matrix with all entries in $[0,1]$. Hence this assigns $\OnticDTransformation{j}(\lambda'|\lambda)$ to a normalized probability distribution between $0$ and $1$. Hence Eq. (\ref{eq:NonContextualityConstraint2}) is satisfied as well. 

We have thus demonstrated that we have adequate solutions to the equations involving $\OnticXWeakMeasureAProb{j}$. We solve the equations involving $\OnticYWeakMeasureAProb{j}$ similarly by providing an ansatz:
\begin{equation}\label{eq:AnsatzForWYSupp}
    \OnticYWeakMeasureAProb{j}(y,\lambda'|\lambda) = \delta_{\lambda\lambda'}\left[\frac{1}{2}(1-\ProbabilityOfDisturbance) + y \ProbabilityOfMeasurement \Im w_{j,\lambda'}\right] + \frac{1}{2}\ProbabilityOfDisturbance \UndetermindedStochasticMatrix{j}(\lambda'|\lambda),
\end{equation}
\noindent where $\UndetermindedStochasticMatrix{j}$ is chosen as before. Showing that this ansatz is normalized and only takes values in $[0,1]$ is done in exactly the same way as for $\OnticXWeakMeasureAProb{j}$, and we will omit it here. First, we check Eq. (\ref{eq:CorrectnessConstraint3}) for $z = +1$:

\begin{equation}
    \begin{split}
        \sum_{\lambda,\lambda' \in \Lambda} \oprep[\rho](\lambda)    \OnticYWeakMeasureAProb{j}(y, \lambda'|\lambda) \OnticMeasureBProb{k}(+1|\lambda') &= \sum_{\lambda,\lambda' \in \mathbb{Z}_d} p_{+1}^\lambda    \OnticYWeakMeasureAProb{j}(y, \lambda'|\lambda) \delta_{\lambda' k}\\
        &= \sum_{\lambda \in \mathbb{Z}_d} p_{+1}^\lambda    \OnticYWeakMeasureAProb{j}(y, k|\lambda) \\
        &= \sum_{\lambda \in \mathbb{Z}_d} \left( p_{+1}^\lambda    \delta_{\lambda k}\left[\frac{1}{2}(1-\ProbabilityOfDisturbance) + y \ProbabilityOfMeasurement \Im w_{j,k}\right] + \frac{1}{2}\ProbabilityOfDisturbance \UndetermindedStochasticMatrix{j}(k|\lambda)\right)\\
        &= \frac{1}{2}(1-\ProbabilityOfDisturbance)p_{+1}^k + y \ProbabilityOfMeasurement p_{+1}^k \Im w_{j,k} + \frac{1}{2}\ProbabilityOfDisturbance \sum_{\lambda \in \mathbb{Z}_d} p_{+1}^\lambda \UndetermindedStochasticMatrix{j}(k|\lambda)\\
        &= \frac{1}{2}(1-\ProbabilityOfDisturbance)p_{+1}^k + y \ProbabilityOfMeasurement p_{+1}^k \Im w_{j,k} + \frac{1}{2}\ProbabilityOfDisturbance \ProbOfBOnDExotD{j,k}{+1}\\
        &= \OutcomePMF{3}{j,k}(y,+1).
    \end{split}
\end{equation}

Hence Eq. (\ref{eq:CorrectnessConstraint3}) is satisfied for $z = +1$. We also check Eq. (\ref{eq:NonContextualityConstraint3}) for $z = {-1}$:
\begin{equation}
    \begin{split}
        \sum_{\lambda,\lambda' \in \Lambda} \oprep[\rho](\lambda)    \OnticYWeakMeasureAProb{j}(y, \lambda'|\lambda) \OnticMeasureBProb{k}({-1}|\lambda') &= \sum_{\lambda,\lambda' \in \mathbb{Z}_d} p_{+1}^\lambda    \OnticYWeakMeasureAProb{j}(y, \lambda'|\lambda) (1-\delta_{\lambda' k})\\
        &= \sum_{\lambda \in \mathbb{Z}_d,\lambda' \in \mathbb{Z}_d\backslash\{k\}} p_{+1}^\lambda    \OnticYWeakMeasureAProb{j}(y, \lambda'|\lambda) \\
        &= \sum_{\lambda' \in \mathbb{Z}_d\backslash\{k\}} \left(\frac{1}{2}(1-\ProbabilityOfDisturbance)p_{+1}^{\lambda'}  + y \ProbabilityOfMeasurement p_{+1}^{\lambda'}\Im w_{j,\lambda'} + \frac{1}{2}\ProbabilityOfDisturbance \ProbOfBOnDExotD{j\lambda'}{+1}\right)\\
        &= \frac{1}{2}(1-\ProbabilityOfDisturbance)p_{-1}^{k}  + y \ProbabilityOfMeasurement p_{-1}^{k} \frac{\sum_{\lambda' \in \mathbb{Z}_d\backslash\{k\}} p_{+1}^{\lambda'}\Im w_{j,\lambda'}}{p_{-1}^{k}} + \frac{1}{2}\ProbabilityOfDisturbance \ProbOfBOnDExotD{j,k}{{-1}}\\
        &= \frac{1}{2}(1-\ProbabilityOfDisturbance)p_{-1}^{k}  + y \ProbabilityOfMeasurement p_{-1}^{k} g_{-1}(\Im w_{j,\lambda'}) + \frac{1}{2}\ProbabilityOfDisturbance \ProbOfBOnDExotD{j,k}{{-1}}\\
        &= \OutcomePMF{3}{j,k}(y,{-1}).
    \end{split}
\end{equation}

Hence Eq. (\ref{eq:NonContextualityConstraint3}) is also satisfied for $z = {-1}$. Last, we will need to check the two Nontextuality Constraints associated with $\OnticYWeakMeasureAProb{j}$. We start with Eq. (\ref{eq:NonContextualityConstraint3}). 
\begin{equation}
    \begin{split}
        \sum_{\lambda'\in\Lambda} \OnticYWeakMeasureAProb{j}(y,\lambda'|\lambda) &= \sum_{\lambda'\in\mathbb{Z}_d} \left(\delta_{\lambda\lambda'}\left[\frac{1}{2}(1-\ProbabilityOfDisturbance) + y \ProbabilityOfMeasurement \Im w_{j,\lambda'}\right] + \frac{1}{2}\ProbabilityOfDisturbance \UndetermindedStochasticMatrix{j}(\lambda'|\lambda)\right)\\
        &= \sum_{\lambda'\in\mathbb{Z}_d} \delta_{\lambda\lambda'}\frac{1}{2}(1-\ProbabilityOfDisturbance)  + \frac{1}{2}\ProbabilityOfDisturbance \UndetermindedStochasticMatrix{j}(\lambda'|\lambda)\\
        &= \frac{1}{2}(1-\ProbabilityOfDisturbance)  + \frac{1}{2}\ProbabilityOfDisturbance  = \frac{1}{2},\\
    \end{split}
\end{equation}
\noindent where to go from the first to the second line we used that by Eq. \eqref{eq:TheWeakValues} the fact that $\rho$ is KD-positive implies that $\Im w_{j,\lambda'} = 0$. Hence Eq. (\ref{eq:NonContextualityConstraint3}) is satisfied. Last, we need to check Eq. (\ref{eq:NonContextualityConstraint4Sup}). 
\begin{equation}
    \begin{split}
        \sum_{y\in\{\pm1\}}  \OnticYWeakMeasureAProb{j}(y,\lambda'|\lambda) &= \sum_{y\in\{\pm1\} }\left(\delta_{\lambda\lambda'}\left[\frac{1}{2}(1-\ProbabilityOfDisturbance) + y \ProbabilityOfMeasurement \Im w_{j,\lambda'}\right] + \frac{1}{2}\ProbabilityOfDisturbance \UndetermindedStochasticMatrix{j}(\lambda'|\lambda)\right)\\
        &= (1-\ProbabilityOfDisturbance)\delta_{\lambda\lambda'}\ + \ProbabilityOfDisturbance \UndetermindedStochasticMatrix{j}(\lambda'|\lambda)\\
        &= (1-\ProbabilityOfDisturbance)\delta_{\lambda\lambda'}\ + \ProbabilityOfDisturbance \OnticDTransformation{j}(\lambda'|\lambda)\\
    \end{split}
\end{equation}
\noindent where we used that we chose $\UndetermindedStochasticMatrix{j}$ earlier to be $\OnticDTransformation{j}$. Hence we have explicitly demonstrated that if the state is KD-positive and $\epsilon$ is sufficiently small, that then there exists a noncontextual hidden variable model for Protocols 1 to 6.

\section{Bounding the Negativity of Decompositions of Exotic States}\label{app:BoundingTheNegativtyOfDecompositionsOfExoticStates}
The main result of the article depends on the claim that every decomposition of an exotic state into pure states has at least one pure state with negativity bounded away from zero. We prove this claim in this Note. More precisely, we prove the following statement:

\begin{Lemma}[Every decomposition of an exotic state has a state with bounded negativity]\label{lem:everydecomphasboundednegativity} Let $\exot \in \KDexot$ be an exotic state. There exists a $\delta > 0$ such that for every decomposition of $\exot$ into pure states, $\exot = \sum_i p_i \psi_i$, there exists at least one pure state $\psi_- \in (\psi_i)_i$ such that $\mcl{N}(\psi_-) > \delta$.
\end{Lemma}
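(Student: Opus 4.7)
\begin{proofoutline}
The plan is to argue by contradiction: assume no such $\delta > 0$ exists and produce a decomposition of $\exot$ into pure KD-positive states, contradicting $\exot \in \KDexot$.

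\textbf{Setup and key compactness ingredient.} For each $n \in \mathbb{N}$, let
\begin{equation*}
K_n \coloneqq \{\psi \text{ pure} : \mcl{N}(\psi) \leq 1/n\}.
\end{equation*}
Since $\mcl{N}$ is continuous on the (compact) manifold of pure states, each $K_n$ is compact, and $\bigcap_{n} K_n$ is exactly the set $K_\infty$ of pure KD-positive states. The negation of the lemma states that for every $n$ there is a decomposition
\begin{equation*}
\exot = \sum_i p_i^{(n)} \psi_i^{(n)}, \qquad \psi_i^{(n)} \in K_n, \qquad p_i^{(n)} \geq 0, \qquad \sum_i p_i^{(n)} = 1.
\end{equation*}

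\textbf{Bounding the number of components.} The crucial step is to make the number of summands independent of $n$ so that a compactness argument applies. Since the Hermitian operators on a $d$-dimensional space form a real vector space of dimension $d^2$, Carath\'eodory's theorem applied to $\conv{K_n}$ implies that we may take at most $d^2 + 1$ (in fact $d^2$) nonzero terms in each decomposition. I would therefore rewrite each decomposition as
\begin{equation*}
\exot = \sum_{i=1}^{d^2+1} p_i^{(n)} \psi_i^{(n)}, \qquad \psi_i^{(n)} \in K_n.
\end{equation*}

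\textbf{Extracting a limit decomposition.} The sequence $\bigl((p_1^{(n)}, \psi_1^{(n)}), \dots, (p_{d^2+1}^{(n)}, \psi_{d^2+1}^{(n)})\bigr)_n$ lives in the compact product space $([0,1] \times K_1)^{d^2+1}$. By Bolzano--Weierstrass, I pass to a subsequence along which $p_i^{(n)} \to p_i^\star$ and $\psi_i^{(n)} \to \psi_i^\star$ for each $i$. Continuity of $\mcl{N}$ together with $\psi_i^{(n)} \in K_n$ forces $\mcl{N}(\psi_i^\star) = 0$, i.e.\ $\psi_i^\star \in K_\infty$; normalization and continuity of the map $(p_i, \psi_i) \mapsto \sum_i p_i \psi_i$ give
\begin{equation*}
\exot = \sum_{i=1}^{d^2+1} p_i^\star \psi_i^\star.
\end{equation*}
This exhibits $\exot$ as a convex combination of pure KD-positive states, contradicting $\exot \in \KDexot$, and therefore the desired $\delta > 0$ exists.

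\textbf{Expected main obstacle.} The only nonroutine point is making the number of pure summands uniform in $n$: without Carath\'eodory the decompositions could have an unbounded (or even infinite) number of terms, and no compactness extraction is available. Everything else reduces to continuity of $\mcl{N}$ (which follows from continuity of the $\KD{j,k}$) and sequential compactness in a finite-dimensional ambient space.
\end{proofoutline}
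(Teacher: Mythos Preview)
Your argument is correct and genuinely different from the paper's. The paper separates $\{\exot\}$ from $\conv{\EcalKDCpu}$ by a hyperplane (a Hermitian operator $\ArbHermOperator$ and thresholds $c<d$), defines $\mathcal{A}=\{\psi\ \text{pure}:\Tr(\ArbHermOperator\psi)\geq d\}$, observes by an averaging argument that every pure decomposition of $\exot$ must hit $\mathcal{A}$, and then applies the Extreme Value Theorem to the continuous $\mcl{N}$ on the compact set $\mathcal{A}$ (which is disjoint from $\EcalKDCpu$) to get $\delta=\min_{\mathcal{A}}\mcl{N}>0$. You instead negate the conclusion, use Carath\'eodory to bound the number of summands uniformly, and extract a limiting decomposition into pure KD-positive states by sequential compactness. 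Both routes ultimately rest on compactness of the pure-state manifold and continuity of $\mcl{N}$; the paper's version is slightly more constructive (it identifies an explicit witness set $\mathcal{A}$ on which one could in principle compute $\delta$, which matches the narrative that Bob ``finds'' such a $\delta$), while your version avoids the separating-hyperplane machinery at the cost of being purely existential. Your handling of the ``main obstacle'' is exactly right: without Carath\'eodory the compactness extraction would fail.
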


To prove this, we aim to construct a set containing all the pure states with negativity bounded away from zero. We do this as follows. We have that $\conv{\EcalKDCpu}$ and $\{\exot\}$ are both closed and compact sets. The hyperplane separation theorem implies that there exists a Hermitian operator $\ArbHermOperator$ and real constants $d > c$ such that (i) $\Tr(\ArbHermOperator\exot) = d$ and (ii) $\forall \rho \in \conv{\EcalKDCpu}$, $\Tr(\ArbHermOperator\rho)\leq c$. We now introduce the set
\begin{equation*}
    \mathcal{A} \coloneqq \{\rho \in \mathcal{D}(\mathbb{C}^d)\;|\;\rho \text{ is pure and }\Tr(\ArbHermOperator\rho) \geq d\},
\end{equation*}
\noindent where $\mathcal{D}(\mathbb{C}^d)$ denotes the set of density operators on $\mathbb{C}^d$, and we prove that this set has the following desirable properties.

\begin{Lemma}[Properties of $\mathcal{A}$]\label{lem:PropertiesOfA} The set $\mathcal{A}$ has the following properties
    \begin{enumerate}[label = (\roman*)]
        \item For every decomposition of an exotic state $\exot$ into pure states, $\exot = \sum_i p_i \psi_i$, there exists at least one pure state $\psi_i$ such that $\psi_i \in \mathcal{A}$.
        \item There exists a real constant $\delta>0$ such that for all $\rho \in \mathcal{A}$, $\mathcal{N}(\rho) \geq \delta$.
    \end{enumerate}
\end{Lemma}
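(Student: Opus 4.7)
The plan is to establish (i) by a one-line averaging argument and (ii) by combining compactness of $\mathcal{A}$ with continuity of $\NPr$, leveraging the separating Hermitian operator $\ArbHermOperator$ and constants $c<d$ already supplied in the text.

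For (i), I would expand $\Tr(\ArbHermOperator\exot)=d$ by linearity of the trace as $\sum_i p_i \Tr(\ArbHermOperator\psi_i)=d$. If every term satisfied $\Tr(\ArbHermOperator\psi_i)<d$, the weighted average would be strictly below $d$, a contradiction. Hence at least one pure component $\psi_i$ satisfies $\Tr(\ArbHermOperator\psi_i)\geq d$ and lies in $\mathcal{A}$.

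For (ii), I would proceed in three steps. First, $\mathcal{A}$ is compact: the set of pure density matrices on $\mathbb{C}^d$ is the continuous image of the compact projective space $\mathbb{CP}^{d-1}$, hence compact, and intersecting with the closed halfspace $\{\rho:\Tr(\ArbHermOperator\rho)\geq d\}$ preserves compactness. Second, I would argue that $\NPr>0$ pointwise on $\mathcal{A}$: if a pure state $\rho\in\mathcal{A}$ were KD-positive, then $\rho\in\EcalKDCpu\subset\conv{\EcalKDCpu}$ and therefore $\Tr(\ArbHermOperator\rho)\leq c<d$, incompatible with $\rho\in\mathcal{A}$; so no state in $\mathcal{A}$ is KD-positive, forcing $\NPr>0$ there. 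Third, $\NPr$ is continuous in $\rho$ (a finite sum of absolute values of continuous linear functionals of $\rho$), so by the extreme value theorem it attains a positive minimum $\delta>0$ on the compact set $\mathcal{A}$, which is the required uniform bound. The main lemma (Lemma~\ref{lem:everydecomphasboundednegativity}) then follows by combining (i) and (ii): any pure-state decomposition of $\exot$ contains a component in $\mathcal{A}$, whose negativity is at least $\delta$.

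The main obstacle is a subtle geometric one rather than a calculational one: the separating-hyperplane argument that produced $\ArbHermOperator$, $c$, $d$ in strictly separated form relies on $\conv{\EcalKDCpu}$ being compact (not merely closed), which in turn requires $\EcalKDCpu$ to be a closed subset of the compact set of pure states so that its convex hull in a finite-dimensional space is compact. Once this point is secured, the hyperplane separation theorem yields strict separation with constants, and the remainder reduces to a clean compactness-plus-continuity deduction with no further analytic subtlety.
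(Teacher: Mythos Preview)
Your proposal is correct and follows essentially the same route as the paper: part (i) is the same averaging/contradiction argument, and part (ii) is the same compactness-plus-continuity argument via the extreme value theorem, together with the observation that any pure KD-positive state would violate the strict separating inequality $\Tr(\ArbHermOperator\rho)\leq c<d$. Your version adds welcome detail (why the pure-state set is compact, why $\NPr$ is continuous, and the remark that strict separation hinges on compactness of $\conv{\EcalKDCpu}$), but the underlying strategy is identical.
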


\begin{proof}
    (i) by contradiction. Suppose that for all $i$, $\psi_i \notin \mathcal{A}$. We then calculate
    \begin{equation}
        \Tr(\ArbHermOperator\exot) = \sum_i p_i\Tr(\ArbHermOperator\psi_i) < \sum_i p_i d = d ,
    \end{equation}
    \noindent where we used the defining property of $\mathcal{A}$ in the inequality. However, from the definition of $\ArbHermOperator$, we know that $\Tr(\ArbHermOperator\exot) = d$. We thus find a contradiction, establishing the result.

    (ii) $\mathcal{A}$ is closed and compact, and  $\mathcal{N}$ is continuous. The Extreme Value Theorem then implies that $\mathcal{N}$ is bounded on $\mathcal{A}$, and that $\mathcal{N}$ attains this bound at some $\rho_{\text{min}}\in \mathcal{A}$.

    Now suppose that $\mathcal{N}(\rho_{\text{min}}) = 0$. This implies that $\rho_{\text{min}}$ is KD-positive. All states in $\mathcal{A}$ are pure, so we have that $\rho_{\text{min}} \in \EcalKDCpu$. From the definition of $\ArbHermOperator$, we this yields 
    \begin{equation}
        \Tr(\ArbHermOperator \rho_{\text{min}})  \leq c < d.
    \end{equation}

    However, since $\rho_{\text{min}} \in \mathcal{A}$, we have that $\Tr(\ArbHermOperator \rho_{\text{min}}) \geq d$ as well. We thus find a contradiction, forcing us to conclude that $\mathcal{N}(\rho_{\text{min}}) \neq 0$. We may thus set $\delta = \mathcal{N}(\rho_{\text{min}}) > 0$ to obtain the desired result.
\end{proof} 

The proof of Lemma \ref{lem:everydecomphasboundednegativity} then straightforwardly follows from Lemma \ref{lem:PropertiesOfA}.

\end{document}